\newtheorem{theorem}{Theorem}
\newtheorem{lemma}{Lemma}
\newtheorem{proposition}{Proposition}
\newtheorem{claim}{\rm\underline{Claim}}
\newtheorem{fact}{Fact}
\newtheorem{definition}{Definition}
\newcommand{\qed}{\hfill $\Box$ \medbreak}
\newenvironment{proof}{\noindent {\bf Proof.}}{\qed}
\newcommand{\val}{\textsl{val}}
\newcommand{\ball}{\textsl{ball}}
\newcommand{\good}{\textsl{good}}
\newcommand{\actions}{\textsl{actions}}
\renewcommand{\time}{\textsl{time}}
\newcommand{\pref}{\textsl{pref}}
\newcommand{\cD}{{\cal D}}
\newcommand{\cF}{{\cal F}}
\newcommand{\cL}{{\cal L}}
\begin{document}

\title{Local Distributed Algorithms for Selfish Agents\footnote{All authors received supports from the ANR project DISPLEXITY, and from the INRIA project GANG.}}

\author[1]{Simon Collet\thanks{Funded by the European Research Council (ERC) under the H2020 research and innovation program (grant  No 648032).}}
\author[1]{Pierre Fraigniaud}
\author[2]{Paolo Penna}
  
\affil[1]{\small CNRS and University Paris Diderot, France.}
\affil[2]{\small ETH Zurich, Switzerland.}

\date{}

\maketitle

\begin{abstract}
In the classical framework of \emph{local distributed network computing}, it is generally assumed that the entities executing distributed algorithms are \emph{altruistic}. However, in various scenarios, the value of the output produced by an entity may have a tremendous impact on its future. This is for instance the case of tasks such as computing maximal independent sets (MIS) in networks. Indeed, a node belonging to the MIS may be later asked more than to a node not in the MIS --- e.g., because MIS in networks are often used as backbones to collect, transfer, and broadcast information, which is costly. In this paper, we revisit typical local distributed network computing tasks in the framework of algorithmic game theory. Specifically, we focus on the construction of solutions for \emph{locally checkable labeling} (LCL) tasks, which form a large class of distributed tasks, including MIS, coloring, maximal matching, etc.,  and which have been studied for more than 20 years in distributed computing. 

Given an LCL task, the nodes are collectively aiming at computing a solution, but, at the individual level, every node plays \emph{rationally} and  \emph{selfishly} with the objective  of optimizing its own profit. Surprisingly, the classical frameworks for game theory are not fully appropriate for the purpose of our study. Moreover, we show that classical notions like \emph{Nash equilibria} may yield algorithms requiring an arbitrarily large number of rounds to converge. Nevertheless, by extending and generalizing core results from game theory, we establish the existence of  a so-called \emph{trembling-hand perfect equilibria}, a subset  of Nash equilibria that is well suited to LCL construction tasks. The main outcome of the paper is therefore that, for essentially all distributed tasks whose solutions are locally checkable, there exist  construction algorithms which are robust to selfishness, i.e., which are guaranteeing  that nodes have no incentive to deviate from the actions instructed by these algorithms.  
\end{abstract}

\vfill

\thispagestyle{empty}
\setcounter{page}{0}
\newpage

%%%%%%%%%%%%%%%%%%%%%%%%%%%%%%%%%%%%%%%%%%%%%
\section{Introduction}
%%%%%%%%%%%%%%%%%%%%%%%%%%%%%%%%%%%%%%%%%%%%%

%---------------------------------------------------------------------------
\subsection{Context and Objective}
%---------------------------------------------------------------------------

 In the conceptual framework of local distributed computing in networks~\cite{Peleg00}, algorithms solving tasks such as minimum-weight spanning tree, maximum matching, coloring, maximal independent set, etc., usually assume that every node of the network will obey their instructions.  This might be appropriate in a context in which each node is altruistic and/or oblivious to its output. However, there exist several settings in which the future of a node depends on its output. For instance, independent sets and dominating sets  can be used as backbones to collect, transfer, and broadcast information, and/or as cluster heads in clustering protocols~\cite{KuhnMW04,MoscibrodaW05a}. Hence, belonging to the independent set or to the dominating set may generate future costs in term of  energy consumption, computational efforts, and bandwidth usage. As a consequence, rational selfish nodes might be tempted to deviate from the instructions of the algorithm, so that to avoid becoming member of the independent set or dominating set under construction. On the other hand, the absence of a backbone or of cluster heads may penalize the nodes. Hence every node is subject to a tension between (1)~facilitating the construction of a solution, and (2) avoiding certain types of solutions. The objective of this paper is to design algorithms enabling to resolve this tension. 
 
For example, a large class of randomized maximal independent set (MIS) algorithms~\cite{AlonBI86,Luby86} proceed in rounds, where a round allows every node to exchange information with its neighbors in the network. Roughly, at each round, every node~$i$ which has not yet decided proposes to enter the MIS with a certain probability $p_i$. If a node proposes itself to enter the MIS, and none of its neighbors is also proposing itself to enter the MIS, then it enters the MIS. If two adjacent nodes simultaneously propose themselves to enter the MIS, then there is a conflict, and both nodes remain undecided. The quality of the algorithm heavily depends on the choice of the probability $p_i$ that node $i$ proposes itself to enter the MIS, which may typically depend on the degree of  node~$i$, and may vary along with the execution of the algorithm as node~$i$ accumulates more and more information about its neighborhood. Hence, a node~$i$ aiming at avoiding entering the MIS might be tempted to deviate from the algorithm by setting $p_i$ small. However, if all nodes deviate in this way, then the algorithm may take a very long time before converging to a solution.
 
Coloring is another example where nodes may adopt selfish strategies to avoid some specific solutions. This is for instance the case where coloring is viewed as an abstraction of frequency assignment in radio networks. Some frequencies might be preferred to others because, e.g.,  some frequencies might be conflicting with local devices present at the node. As a consequence, not all colors are equal for the nodes, and while each node is aiming at constructing a coloring quickly (in order to take benefit from the resulting radio network), it is also aiming at avoiding being assigned a color that does not please it. Now, similarly to MIS construction, a large class of randomized coloring algorithms proceed in rounds, roughly as follows (see, e.g.,~\cite{BarenboimE13}). At each round, every node that has not yet decided a color (1)~picks a color uniformly at random among the available colors (i.e., among a given palette, but the colors already decided by neighboring nodes), (2)~checks whether this color is conflicting with the random colors picked by the neighbors, and (3)~ decides this color in absence of conflicts. If a node has preferences to some colors, then it might be tempted to give more weight to its preferred colors in its random choices. Again, if all nodes deviate this way, then the algorithm may take a long time before converging to a solution, if converging at all.  

In this paper, we address the issue of designing distributed algorithms that are robust to  selfish behaviors of  the nodes. Obviously, algorithmic game theory has been conceived especially for tackling such  issues. So, more precisely, we are aiming at answering the following question: 

%\medskip

%\noindent\textbf
\paragraph{Existence of  algorithms coping with non-cooperative behavior:} Do there exist some forms of equilibria in the games formalizing distributed tasks such as the one sketched above? That is, do there exist strategies, i.e., probability distributions over the random choices to be made by each node at each round, from which no rational selfish players have incentive to deviate? More importantly, if such equilibria exist, are they corresponding to efficient algorithms for solving the tasks? 

%---------------------------------------------------------------------------
\subsection{Our Results}
%---------------------------------------------------------------------------

First, we define \emph{LCL games}, induced by a generic form of randomized algorithms for constructing solutions to \emph{locally checkable labeling} (LCL) tasks (cf. Section~\ref{sec:lcldef}). Recall that locally checkable labeling tasks~\cite{NaorS95} form a large class of classical problems, including maximal independent set, coloring, maximal matching, minimal dominating set, etc.,  studied for more than 20 years in the framework of distributed computing.

Next, we observe that classical frameworks for algorithmic game theory do not fully capture the essence of distributed games (cf. Section~\ref{sec:rw}). Moreover, we show that the classical notion of Nash equilibria is not sufficient to characterize desirable distributed algorithms, i.e., efficient and robust to the selfish behaviors of the nodes. In particular, we show that there are Nash equilibria which correspond to algorithms whose running times, i.e., number of rounds, are arbitrarily large (cf. Proposition~\ref{theo:nashisbad}). 

In contrast, we show that the stronger notion of \emph{trembling-hand perfect equilibria} (a subset of Nash equilibria) does not suffer from the above problem. In fact, trembling-hand perfect equilibrium is a notion of equilibrium that is well suited to the purpose of designing efficient and robust randomized distributed algorithms for LCL tasks. Indeed, trembling-hand perfect equilibria are limits of Nash equilibria of perturbed games, that is, games in which players must play completely mixed strategies (i.e. strategies in which each action has a constant probability of being played). As such, trembling-hand perfect equilibria do not suffer from the undesirable phenomenons of players blocking each other for an arbitrarily long time. Indeed, this phenomenon results from weakly dominated strategies that stick to Nash equilibria in which players repeat the same choices again and again for an arbitrarily long amount of time. 

We prove that all LCL games whose solution can be constructed by a greedy sequential algorithm, such as MIS, $(\Delta+1)$-coloring, maximal matching, etc., have trembling-hand perfect equilibria. Specifically, our main result is the following: 

\begin{theorem}\label{cor:mainMIS}
Let $\cL$ be a greedily constructible locally checkable labeling. The game associated to~$\cL$ has a symmetric trembling-hand perfect equilibrium. 
\end{theorem}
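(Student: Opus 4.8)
The plan is to follow Selten's classical construction of trembling-hand perfect equilibria, but carried out entirely within the subspace of \emph{symmetric} strategy profiles and specialized to the distributed setting. Recall that a trembling-hand perfect equilibrium is a limit of Nash equilibria of \emph{perturbed} games in which every action must be played with strictly positive probability. Accordingly, for each small $\varepsilon>0$ I would introduce a perturbed LCL game $G_\varepsilon$ in which, at every decision point, each locally available action is constrained to be played with probability at least $\varepsilon$ (a \emph{completely mixed} constraint over the legal actions). I would then (i) prove that each $G_\varepsilon$ admits a \emph{symmetric} Nash equilibrium, and (ii) let $\varepsilon \to 0$ and extract a convergent subsequence whose limit is, by definition, a symmetric trembling-hand perfect equilibrium of the original game.

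For step (i), the key point is to locate the equilibrium inside the set of symmetric behavioral strategies, i.e. profiles in which every node uses the same local strategy $\sigma$. This set is convex and compact, so the natural route is a Kakutani fixed-point argument applied to the symmetric best-response correspondence $\sigma \mapsto \mathrm{BR}_\varepsilon(\sigma)$, sending a common strategy $\sigma$ to the set of $\varepsilon$-perturbed strategies that are best replies when every other node also plays $\sigma$. To invoke Kakutani I must verify that this correspondence is nonempty-valued, convex-valued, and upper hemicontinuous, which reduces to showing that each node's expected payoff is continuous in the strategy profile. A fixed point $\sigma_\varepsilon$ is then a symmetric Nash equilibrium of $G_\varepsilon$.

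The crux of the whole argument --- and the place where the hypothesis that $\cL$ is \emph{greedily constructible} is essential --- is establishing that the payoffs are well defined and continuous despite the unbounded number of rounds. Greedy constructibility guarantees that from any reachable partial labeling every still-undecided node has at least one locally legal action available, so the construction can never deadlock, even under trembles. Consequently, under any completely mixed profile of $G_\varepsilon$ there is a uniform positive probability of making progress at each round, the process terminates almost surely, and the expected running time (hence the expected payoff) is finite. This lets me truncate the game at a finite horizon with an error that is uniform in $\sigma$ and vanishes as the horizon grows, effectively reducing $G_\varepsilon$ to a finite symmetric game and thereby delivering both the continuity needed for Kakutani and the applicability of the symmetric Nash existence theorem. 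I expect this finite-horizon approximation, together with the verification that termination (and hence payoff continuity) holds uniformly over the perturbed strategy space, to be the main technical obstacle.

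Finally, for step (ii), the family $(\sigma_\varepsilon)_{\varepsilon>0}$ lives in the compact set of symmetric behavioral strategies, so some subsequence converges to a symmetric profile $\sigma^\star$ as $\varepsilon \to 0$. Since each $\sigma_\varepsilon$ is a symmetric Nash equilibrium of the corresponding perturbed game, the limit $\sigma^\star$ is by definition a symmetric trembling-hand perfect equilibrium of the game associated to $\cL$; symmetry is preserved because every approximant is symmetric and the symmetric subspace is closed. This completes the proof.
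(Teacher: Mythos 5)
Your overall architecture (perturb the game, find symmetric Nash equilibria of the perturbed games, pass to the limit) follows the definition of trembling-hand perfection directly, and it differs from the paper, which never proves existence of equilibria in \emph{infinite} perturbed games: the paper truncates the game at finite rounds, invokes Selten's theorem for \emph{finite} extensive games with perfect recall as a black box on each truncation $\Gamma_t$, and then uses its metric and characterization lemmas to show that a convergent subsequence of the induced profiles is a trembling-hand perfect equilibrium of the infinite game. The genuine gap in your proposal is in step (i). You claim that the hypotheses of Kakutani's theorem for the best-response correspondence on behavioral strategies ``reduce to'' payoff continuity. Continuity gives nonemptiness and upper hemicontinuity, but \emph{not} convex-valuedness: in behavioral strategies of an extensive game, the expected payoff is multilinear, not linear, in a player's own local strategies, and best-response sets are typically non-convex even in tiny finite games with perfect recall. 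Concretely, consider a single player who chooses $a$ or $b$ at the root; $b$ terminates with payoff $1$; after $a$ she chooses between $c$ (payoff $1$) and $d$ (payoff $0$). Writing $p$ for the probability of $a$ and $q$ for that of $c$, the expected payoff is $1-p(1-q)$, whose maximizer set $\{p=0\}\cup\{q=1\}$ is L-shaped and non-convex: $(p,q)=(0,0)$ and $(1,1)$ are best responses while $(\tfrac12,\tfrac12)$ is not. This is precisely the obstruction that forces Selten's own existence proof through the \emph{agent normal form} rather than through a naive fixed point on behavioral strategies, and your sketch imports the normal-form intuition (linearity of payoff in one's own mixed strategy) into a setting where it fails.

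The gap is repairable, but not for free. In the perturbed game every information set is reached with positive probability, so one can try to characterize best responses by local one-shot-deviation optimality and argue that the best-response set is a product of faces of the $\varepsilon$-floored simplices, hence convex; but the one-shot-deviation principle in an infinite-horizon game itself requires a continuity-at-infinity argument (this is where the discount factor $\delta<1$ must enter), and none of this appears in your proposal. Alternatively one can pass to the agent normal form or the sequence form, again with infinite-horizon care. The paper avoids all of this by reducing to finite truncations, where Selten's theorem already supplies trembling-hand perfect (not merely Nash) equilibria, and by putting the analytic work into showing that limits of induced equilibria of truncations are equilibria of the infinite game (its lemmas on truncated games, perturbed truncations, and closedness of equilibrium sets). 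Your step (ii) and the symmetry-preservation argument are fine, and your observation that greedy constructibility prevents deadlock (so that action sets remain nonempty and the game is well defined) matches the paper's use of that hypothesis; but as written, the existence of symmetric Nash equilibria of the infinite perturbed games --- the heart of the proof --- is not established.
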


This result is achieved by establishing a general result, of independent interest, stated below as a lemma: 

\begin{lemma}\label{theo:maingeneral}
Every infinite, continuous, measurable, well-rounded, symmetric extensive game with perfect recall and finite action set has a symmetric trembling-hand perfect equilibrium.
\end{lemma}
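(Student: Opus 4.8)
The plan is to adapt Selten's classical construction of trembling-hand perfect equilibria from finite games to the present infinite setting, the key additions being (i)~a symmetrization step guaranteeing that the equilibrium we produce is symmetric, and (ii)~the topological machinery needed to handle an infinite game tree. Recall that a trembling-hand perfect equilibrium is, by definition, a limit of Nash equilibria of \emph{perturbed} games in which every action must be played with strictly positive probability. So the proof naturally splits into two parts: first produce, for each admissible vector of lower bounds $\delta$ on the action probabilities, a \emph{symmetric} Nash equilibrium $\sigma_\delta$ of the corresponding perturbed game $G_\delta$; then let $\delta\to 0$ and extract a convergent subsequence whose limit $\sigma^\star$ is the desired equilibrium.

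For the first part, fix $\delta$. The perturbation constrains each player's behavioral strategy at every information set to the compact convex polytope of distributions assigning probability at least $\delta_a$ to each of the finitely many available actions~$a$; since the action set is finite and the game has perfect recall, Kuhn's theorem lets us pass freely between behavioral and mixed strategies. Endow the product of these per-information-set polytopes with the product topology, under which, by Tychonoff's theorem, the full strategy space $\Sigma_\delta$ is compact and convex (and metrizable, as there are only countably many information sets). The hypotheses that the game is \emph{continuous}, \emph{measurable}, and \emph{well-rounded} are exactly what is needed to guarantee that each player's expected payoff is a well-defined continuous function on $\Sigma_\delta$: well-roundedness ensures that the random play reaches a terminal outcome almost surely (so payoffs are integrable), while continuity and measurability ensure that the induced outcome distribution, and hence the expected payoff, varies continuously with the profile. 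With continuous payoffs on a compact convex strategy space, the best-response correspondence is nonempty-valued, convex-valued, and upper hemicontinuous, so the Kakutani--Fan--Glicksberg fixed-point theorem yields a Nash equilibrium of $G_\delta$.

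To make this equilibrium symmetric, restrict attention to the diagonal $\Sigma_\delta^{\mathrm{sym}}\subseteq\Sigma_\delta$ of profiles in which all players use the same strategy; this set is again compact and convex. Because the game is \emph{symmetric}, relabeling players carries best responses to best responses, so the symmetrized best-response correspondence on $\Sigma_\delta^{\mathrm{sym}}$ inherits the properties required by the fixed-point theorem, and a fixed point there is a symmetric Nash equilibrium $\sigma_\delta$ of $G_\delta$, which is moreover completely mixed by construction.

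For the second part, take any sequence $\delta^{(k)}\to 0$. By compactness of the ambient strategy space in the product topology, the symmetric equilibria $\sigma_{\delta^{(k)}}$ admit a subsequence converging to some profile $\sigma^\star$. Symmetry is a closed condition, so $\sigma^\star$ is symmetric; and since each $\sigma_{\delta^{(k)}}$ is a completely mixed Nash equilibrium of a perturbed game whose perturbation vanishes, $\sigma^\star$ is by definition a trembling-hand perfect equilibrium, continuity of the payoffs guaranteeing in particular that it is a Nash equilibrium of the unperturbed game. The main obstacle throughout is the infinite horizon: unlike in Selten's finite case, neither compactness of the strategy space nor continuity of the payoff functionals is automatic, and both the fixed-point argument and the limiting argument rely on them. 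This is precisely where the \emph{well-rounded} and \emph{continuous} hypotheses do the essential work, taming the infinite game tree enough that the finite-game proof template goes through.
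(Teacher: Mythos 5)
Your overall plan (build Nash equilibria of the perturbed infinite games directly, then let the perturbation vanish and extract a limit) is a genuinely different route from the paper's, but its central step has a genuine gap: the assertion that ``with continuous payoffs on a compact convex strategy space, the best-response correspondence is nonempty-valued, convex-valued, and upper hemicontinuous.'' Convex-valuedness is simply false in the space you are working in, namely products of per-information-set polytopes, i.e.\ behavioral strategies. A player's expected payoff is \emph{multilinear} in her own local strategies across her different information sets, not quasi-concave in her behavioral strategy as a whole. Concretely: one player moves twice with perfect recall, payoff $1$ for the plays $LL$ and $RR$ and $0$ for $LR$ and $RL$; both pure optima are best responses, but the behavioral strategy putting probability $\frac12$ everywhere (their midpoint) earns only $\frac12$. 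So Kakutani--Fan--Glicksberg does not apply as stated. Invoking Kuhn's theorem does not repair this: Kuhn's theorem for \emph{infinite} trees is itself a nontrivial result (due to Aumann) that would have to be established in this framework, and even granting it, the perturbation constraints $\hat{b}_{i,u}(a)\geq\eta_u(a)$ are constraints on induced \emph{conditional} probabilities of a mixed strategy, so transporting the perturbed game to mixed strategies and back requires an argument you do not give. The same objection applies to your diagonal symmetrization step, which otherwise is the right idea.

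The standard repair is Selten's own: pass to the agent normal form of the perturbed game, where each information set is a separate agent with a single ($\eta$-truncated) simplex as strategy set and a payoff linear in its own variable; there your fixed-point and symmetrization arguments go through. But then one must prove that an agent-form equilibrium of the perturbed game is a Nash equilibrium of that perturbed game in behavioral strategies. In finite games this follows from perfect recall plus the fact that completely mixed profiles reach every information set; with infinite horizon it is a one-shot-deviation argument that needs exactly the continuity-at-infinity structure encoded in the paper's metric $d$, and it is a substantive missing piece, not a routine detail. This difficulty is precisely what the paper's proof is engineered to avoid: it never applies a fixed-point theorem to the infinite game at all. Instead it truncates $\Gamma$ at round $t$, applies Selten's finite-game existence theorem to each truncation $\Gamma_t$, and proves characterization lemmas (the bound $w^t\to 0$, the two induced-equilibria claims, and closedness of the set of trembling-hand perfect $\epsilon$-equilibria) showing that the limit of a convergent subsequence of induced equilibria---which exists by sequential compactness of $B$---is a trembling-hand perfect equilibrium of $\Gamma$. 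One last small point: well-roundedness does not ensure that play ``reaches a terminal outcome almost surely'' (infinite plays \emph{are} terminal histories, with payoff defined, e.g.\ $0$ in LCL games); its role is to make the round of an information set, and hence the truncations, well defined. Integrability comes from measurability plus boundedness of $\pi$, not from well-roundedness.
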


Our results open a new line of research aiming at designing distributed algorithms that are not only efficient, but also robust to  selfish behaviors of the nodes. Indeed, Theorem~\ref{cor:mainMIS} establishes that such algorithms \emph{exist}, for all ``natural'' locally checkable labeling (LCL) tasks. 

%---------------------------------------------------------------------------
\subsection{Additional Related Work}
%---------------------------------------------------------------------------

The  readers interested in specific important distributed network computing problems such as MIS and $(\Delta+1)$-coloring  are invited to consult~\cite{FHK16,Gha16,HSS16} and the references therein for the most recent advances in the field. See also~\cite{Suo13} for a survey, and~\cite{CKP16,FeuilloleyF15,NaorS95} for results on derandomization techniques for LCL tasks.

Other forms of game theoretic approaches have been studied in the past, in the context of distributed computing, and/or using a distributed computing approach for solving graph problems. In~\cite{EW11}, the authors study programming incentives to the programmers for writing codes that are efficient for the overall system while optimizing individual program’s performances, in the context of transactional memory. In~\cite{AugerCCR13}, the authors describe and analyze ways of computing max-cuts by interpreting these cuts as the pure Nash equilibria of an $n$-player non-zero sum game, each vertex being an agent trying to maximize her selfish interest. 

The papers the most closely related to our work are~\cite{Nis99,NR99,NSVZ11}. In the survey~\cite{Nis99}, the author addresses several problems related to distributed computing, under a large spectrum of different models. It is shown that standard mechanism design techniques (e.g., Vickrey-Groves-Clark mechanism) can be applied to some of these problems. However, such an application requires a central authority which knows everything about the system, except of course the players private information. On the other hand, the paper also describes and analyzes a two-step algorithm for solving MIS on paths, where the players are bounded to communicate only with their neighbors. The algorithms however requires a trusted entity that distributes payments to the players. The class of best-response mechanisms is studied in~\cite{NSVZ11}. Such mechanisms apply to strategic games with a unique equilibrium, in which the players proceed by repeatedly best-respond to each other. Several important problems can be casted in this framework, including the BGP protocol, TCP-like games, among others. However, for applying the mechanism, the underlying game must have a unique equilibrium (plus other technical conditions), and the players are oblivious about the time required to compute the solution. Even the first condition alone may not be satisfied by LCL games (it is typically not satisfied by the MIS game). 

In contrast to the above, our setting does not assume any central entity nor payments, it applies to general graphs and to all ``natural'' LCL tasks (including MIS). We prove the existence of suited algorithms in this general setting. 

%%%%%%%%%%%%%%%%%%%%%%%%%%%%%%%%%%%%%%%%%%%%%
\section{LCL Games}
\label{sec:lcldef}
%%%%%%%%%%%%%%%%%%%%%%%%%%%%%%%%%%%%%%%%%%%%%

%---------------------------------------------------------------------------
\subsection{LCL languages}
%---------------------------------------------------------------------------

Our framework is distributed network computing, for which we adopt the following standard setting (see, e.g., \cite{FraigniaudKP13}). Let $A$ be a finite alphabet. A network \emph{configuration} is a pair $(G,s)$ where $G=(V,E)$ is a simple connected graph, and $s:V\to A$ is a state function, assigning a state $s(v)\in A$ to every node $v\in V$. A \emph{distributed language} over a graph family $\cF$ is a  Turing computable set $\cL$ of configurations such that, for every graph $G\in \cF$, there is a state function~$s$ for which $(G,s)\in \cL$. The class LCL, for \emph{locally checkable labeling}, has been defined in~\cite{NaorS95}. Let $\Delta\geq 2$, and let $\cF_\Delta$ be the family of simple connected graphs with maximum degree~$\Delta$. An LCL language over $\cF_\Delta$ is a distributed language defined by a set of balls of radius at most~$t$ in graphs belonging to $\cF_\Delta$, for some $t\geq 0$, with nodes labeled by values in~$A$. These balls are called the \emph{good} balls of $\cL$.  The set of good balls of an LCL language $\cL$ is denoted by $\good(\cL)$. The parameter $t$ is called the \emph{radius} of $\cL$. Throughout the paper, we shall solely consider graphs with degrees upper bounded by some $\Delta\geq 2$, and thus we ignore the index $\Delta$ in $\cF_\Delta$ in the following. 

For instance, MIS is the LCL language of radius~1 with $A=\{0,1\}$,  where $\good(\cL)$ is the set of balls with radius~1 centered at a node $v$, such that either $s(v)=0$ and there exists a neighbor~$w$ of~$v$ with $s(w)=1$, or $s(v)=1$ and $s(w)=0$ for every neighbor $w$ of $v$. Similarly, $(\Delta+1)$-coloring is the LCL language of radius~1 with $A=\{1,\dots,\Delta+1\}$,  where $\good(\cL)$ is the set of balls with radius~1 whose center $v$ satisfies $s(v)\neq s(w)$ for every neighbor $w$ of $v$. 

Every LCL language $\cL$ induces an associated construction \emph{task}. Given $G\in \cF$, each of its nodes~$v$ has to  compute a value $s(v)\in A$ such that $(G,s)\in\cL$. In order to guarantee the existence of efficient distributed algorithms for solving the task induced by $\cL$, we consider a subclass of LCL languages, called  \emph{greedily constructible}. Roughly, an LCL language $\cL$ is greedily constructible if any partial solution can be extended. More formally, let us consider an extra value $\bot$, whose interpretation is that a node $v$ with $s(v)=\bot$ has not yet decided an output value in~$A$. For a language $\cL$ of radius~$t$, a radius-$t$ ball $B$ is said to be \emph{partially} good if $B$ is resulting from a good ball by replacing the value of some of its node labels by~$\bot$. Let $G=(V,E) \in \cF$, and let $s:V\to A\cup\{\bot\}$ for which all balls in $(G,s)$ are partially good. Let $U=\{v\in V : s(v)=\bot\}$, and let $s':V\to A$ be such that $s'(v)=s(v)$ for all nodes $v\in V\setminus U$. We call such a function~$s'$ an  \emph{extension} of~$s$. Let $U'$ be composed of all nodes in $U$ whose balls in $(G,s')$ are good, and let $s'':V\to A\cup\{\bot\}$ be defined as $s''(v)=s(v)$ for $v\in V\setminus U$, $s''(v)=\bot$ for $v\in U \setminus U'$, and $s''(v)=s'(v)$ for $v\in U'$. An LCL language $\cL$ is greedily constructible if for every such $G$ and $s$ such that $U\neq \emptyset$, we have (1) there exists an extension $s'$ of $s$ such that $s''\neq s$, and, (2) for every extension $s'$ of $s$ such that $s''\neq s$, all balls in $(G,s'')$ are partially good. For instance, $k$-coloring is greedily constructible for any $k>\Delta$, while $\Delta$-coloring is not (even in $\Delta$-colorable graphs). MIS and maximal matching also fall into this class as long as we enforce that nodes (resp., edges) cannot decide by themselves  not to enter the MIS (resp., matching), but can only do so \emph{after} a neighbor has entered the MIS  (resp.,  matching). 

%---------------------------------------------------------------------------
\subsection{A generic randomized algorithm for LCL languages}
%---------------------------------------------------------------------------

The generic randomized construction algorithm for LCL languages described in this subsection is conceived for the classical {\sc local} model of distributed computation~\cite{Peleg00}. Recall that this model assumes that all nodes start simultaneously, and perform synchronously in a sequence of rounds.  At each round, every node exchanges information with its neighbors, and performs individual computations. This goes on at every node until the node decides some output value, and terminates. The nodes do not need to terminate at the same round. However, the output value of a terminated node is irrevocable. Our generic algorithm is described in Algorithm~\ref{alg:typical-form}. 

\begin{algorithm}[htb]
  \caption{\sl Generic algorithm executed at a node $v$.}
  \label{alg:typical-form}
  \small
  \begin{algorithmic}[1]
  \State{$\val(v) \leftarrow \bot$}
  \State{observe $\ball(v)$} \label{ins:obs1}
  \While{$\ball(v)\notin \good(\cL)$} \label{ins:while}
  \State{set $\val(v) \in A$ compatible with $\ball(v)$, at random, according to a probability distribution $\cD$} \label{ins:rand}
  \State{observe $\ball(v)$}   \label{ins:obs2}
      \EndWhile
  \State{\Return$\val(v)$}
  \end{algorithmic}
\end{algorithm}

In Algorithm~\ref{alg:typical-form}, every node~$v$ aims at computing $\val(v)\in A$. It starts with initial value $\val(v)=\bot$. At Step~\ref{ins:obs1} node~$v$ collects the $t$-ball $\ball(v)$ around it, i.e., the values of all the labels of nodes at distance at most $t$ from it, including the structure of the ball, where $t$ is the radius of~$\cL$. (This can be implemented in $t$ rounds in the {\sc local} model). At Step~\ref{ins:while}, node $v$ checks whether the observed ball, $\ball(v)$, is good. If not, then Step~\ref{ins:rand} consists for~$v$ to select a random value $\val(v)\in A$ compatible with $\ball(v)$, i.e., with values of nodes in the ball fixed at previous rounds. The random choice of $\val(v)$ is governed by a probability distribution~$\cD$, which may depend on the round, and on the structure of $v$'s neighborhood as observed during the previous rounds. In the case of MIS, this choice essentially consists in choosing  whether $v$ proposes itself for entering the MIS, with probability $p_v$. Luby's algorithm~\cite{Luby86} sets $p_v = \frac{1}{2\deg(v)}$ where $\deg(v)$ is the degree of $v$ in the graph, not counting the neighbors which already decided at previous rounds. Similarly, in the case of $(\Delta+1)$-coloring, the  choice of $\val(v)$ essentially consists in proposing a color at random. Barenboim-Elkin's algorithm~\cite{BarenboimE13} sets the probability $p_c$ of proposing color $c$ as $\frac{1}{2}$ for $c=0$ (a ``fake'' color), and $p_c=\frac{1}{2k}$ for any of the $k$ colors $c\in\{1,\dots,\Delta+1\}$ that have not yet been decided by some neighboring node(s) during the previous rounds. Once the new value $\val(v)$ has been selected, node~$v$ observes $\ball(v)$ in Step~\ref{ins:obs2}, including all selected values of nodes in this ball. If the selected values by the nodes in this ball make the ball good, then $v$ terminates and outputs $\val(v)$. Otherwise, $v$ starts a new iteration of the while-loop. 

For any greedily constructible language $\cL$, the algorithm terminates (i.e., all nodes return a value) as long as every value in $A$ has non zero probability to be chosen, and the collective output $\val=\{\val(v),v\in V\}$ is such that $(G,\val)\in \cL$ for any graph $G=(V,E)$. The performances of the algorithm (i.e., how many rounds it takes until all nodes terminate) however depends on the choice of the probability distributions $\cD$ chosen by each node $v\in V$ at each round. For instance, in the case of MIS, the algorithm converges if every node~$v$  proposes itself to enter the MIS with probability $p_v=\frac{1}{2}$. However, the choice $p_v = \frac{1}{2\deg(v)}$ as in Luby's algorithm guaranties a much faster convergence: $O(\log n)$ rounds, w.h.p., in $n$-node networks. 

%---------------------------------------------------------------------------
\subsection{A distributed game for LCL languages}
%---------------------------------------------------------------------------

Let $\cL$ be a greedily constructible LCL language of radius $t$. $\cL$ is associated to a game, that we call LCL game, and that we define  as follows (for a more formal definition, see Section~\ref{subsec:formaldefLCLgame}). Let $G$ be a connected simple graph. Every node $v$ of $G$ is a rational and selfish \emph{player}, playing the game with the ultimate goal of maximizing its payoff. Roughly, a \emph{strategy} for a node $v$ is a probability distribution $\cD$ over the values in $A$ compatible with the $t$-ball centered at~$v$ such as observed by $v$, employed at Step~\ref{ins:rand} in Algorithm~\ref{alg:typical-form}, which may depend on the history of~$v$ during the execution of the algorithm. 

As mention in the introduction, we want the payoff of the nodes to reflect the tension between the objective of every node to compute a global solution rapidly (for this global solution brings benefits to every node), and its objective of avoiding certain forms of solutions (which may not be desirable from an individual perspective). So, to define the payoff of a node, we denote by $\pref$ a \emph{preference function}, which is an abstraction of how much a node will ''suffer'' in the future according to a computed solution. For instance, in the MIS game, for a computed MIS $I$, we could have, e.g., $\pref(I)=1$ from a node not in $I$, and $\pref(I)=0$ for a node in~$I$.  In general, for $G=(V,E)$, we define
$
\pref: \good(\cL) \to \mathbb{R}^+
$
by associating to each ball $B\in\good(\cL)$,  the preference $\pref(B)$ of the center of ball $B$. The \emph{payoff function} $\pi_v$ of node $v$ at the completion of the algorithm is decaying with the number $k$ of iterations of the while-loop in Algorithm~\ref{alg:typical-form} before the algorithm terminates. More precisely, we set
$
\pi_v= \delta^{k}\;\pref(B_v)
$ 
where $0< \delta< 1$, $B_v$ is the $t$-ball centered at $v$ returned by Algorithm~\ref{alg:typical-form}, and $k$ is the number of while-loop iterations performed by Algorithm~\ref{alg:typical-form} before all nodes in $B_v$ output their values. The choice of $\delta\in(0,1)$ reflects the tradeoff between the quality of the solution from the nodes' perspective, and their desire to construct a global solution. Note that $k$ is at least the time it took for $v$ to decide a value in $A$, and at most the time it took for the algorithm to terminate at all nodes. If the algorithm does not terminate around $v$, that is, if the $t$-ball around $v$ remains perpetually undecided in at least one of its nodes, then we set  $\pi_v=0$. 

The payoff of a node $v$ will thus be large if all nodes in its $t$-ball decide quickly (i.e., $k$ is small), and if the computed $t$-ball $B_v$ suits node~$v$ (i.e., $\pref(B_v)$ is large). Conversely, if $v$ or another node in its $t$-ball takes many rounds before deciding a value (i.e., $k$ is large), or if node $v$ is not satisfied by the computed solution $B_v$ (i.e., $\pref(B_v)$ is small), then the payoff of $v$ will be small. In particular, if  $\pref(B)=\pref(B')$ for every good balls~$B$ and $B'$, then maximizing the payoff is equivalent to completing the task quickly. Instead, if $\pref(B)$ is very small for some good balls $B$, then nodes might be willing to slow down the completion of the task, with the objective of avoiding being the center of these balls, in order to maximize their payoff. That is, such nodes may bias their distribution $\cD$ towards preferred good balls, even if this is to the price of increasing the probability of conflicting with the choices of neighboring nodes, resulting in more iterations for reaching convergence. 

As mentioned before, a \emph{strategy} of a node at one iteration of Algorithm~\ref{alg:typical-form} is a distribution of probability over the values in $A$, which may depend on the history of the node at any point in time during the execution of Algorithm~\ref{alg:typical-form}. Note that, at the $k$th iteration of the while-loop, every node $v$ has acquired knowledge about its environment at distance $kt$ in the actual graph, where $t$ is the radius of the considered language. Importantly, the strategies depend only on the knowledge accumulated by the nodes along the execution of the algorithm. In particular, the fact that nodes might be given  distinct identities, as in the {\sc local} model, is ignored by the nodes to set their strategies, but used only for distinguishing the nodes in the network, in a way similar to the way classical randomized distributed algorithms perform. In fact, at the beginning of the algorithm, player $v$ does not even know which node she will play in the network, and just knows that the network belong to some given graph family $\cF$. 

A first question of interest is whether there a way to assign a strategy profile to the nodes so that no nodes will have incentive to deviate from their given individual strategies? In other words, do there exist Nash equilibria for LCL games? 

%%%%%%%%%%%%%%%%%%%%%%%%%%%%%%%%%%%%%%%%%%%%%
\section{The nature of LCL games}
\label{sec:rw}
%%%%%%%%%%%%%%%%%%%%%%%%%%%%%%%%%%%%%%%%%%%%%

%---------------------------------------------------------------------------
\subsection{LCL games as forms of extensive games with imperfect information}
%---------------------------------------------------------------------------

Table~\ref{table1} surveys the existence theorems for equilibria in various
game settings, from the  finite strategic games to the extensive games with imperfect
information. We restrict our attention to games with a finite number of players 
(for games with infinitely many players, see~\cite{Pel69}).
Trembling hand equilibria are refinements of sequential equilibria, which are themselves refinements of subgame-perfect equilibria, all of them being Nash equilibria. 

\begin{table}[htb]
  \centering
  {\footnotesize
    \begin{tabulary}{\textwidth}{L|L|L|L}
      & \textbf{Strategic Games} & \textbf{Extensive games with} & \textbf{Extensive games with}\\
      & & \textbf{perfect information} & \textbf{imperfect information}\\
      \hline
      & & Selten, 1965 \cite{Sel65} & Selten, 1975 \cite{Sel75}\\
      \textbf{Finite games} & Nash, 1950 \cite{Nas50} & Subgame-perfect equilibrium & Trembling-hand perfect eq.\\
      & Nash equilibrium & Pure strategies & Behavior strategies\\
      \cline{1-1} 
      \cline{3-4}
      \textbf{Games with} & Mixed strategies & Fudenberg, Levine, 83 \cite{FL83} & Fudenberg, Levine, 83 \cite{FL83}\\
      \textbf{a finite} & & Subgame-perfect equilibrium & Sequential equilibrium\\
      \textbf{action set} & & Pure strategies & Behavior strategies\\
      \hline

      \textbf{Games with} & Fan, 1952 \cite{Fan52} & Harris, 1985 \cite{Har85} & Chakrabarti, 1992 \cite{Cha92}\\
      \textbf{an infinite} & Glicksberg, 1952 \cite{Gli52} & Subgame-perfect equilibrium & Nash equilibrium\\
      \textbf{action set} & Nash equilibrium & Pure strategies & Behavior strategies\\
      & Mixed strategies & & \\    
    \end{tabulary}
  }
  \caption{\sl A summary of main results about the existence of equilibria}
  \label{table1}
\end{table}

In Table~\ref{table1}, we distinguish strategic games (i.e., 1-step games like, e.g., prisoner's dilemma) from extensive games (i.e., game trees with payoffs, like, e.g., monetary policy in economy). For the latter class, we also distinguish games with perfect information (i.e.,  every player knows exactly what has taken place earlier in the game), from the games with imperfect information. LCL games belongs to the class of extensive games with imperfect information, since a node plays arbitrarily many times, and is not necessarily aware of the actions taken by far away nodes in the network. 

In Table~\ref{table1}, we also distinguish finite games (i.e., games with a finite number of pure strategies and finite number of repetitions) from games with infinite horizon (i.e., games which can be repeated infinitely often). The latter class of games is also split into games with finite numbers of actions, and games with infinite set of actions (like, e.g., when fixing the price of a product). LCL games belongs to the class of games with infinite horizon and finite action set: the horizon is infinite since neighboring nodes may perpetually prevent each other from terminating, and the action set $A$ is supposed to be finite.

The previous work most closely related to our framework is~\cite{FL83}. Indeed,  as was said before, LCL games are extensive games with imperfect information, and finite action set. 
 Fudenberg and Levine~\cite{FL83} proved that every such game, under specific assumptions, 
 has a sequential equilibrium (in behavioral strategies). The class of games for which this holds is 
  best described as extensive games with observable actions, simultaneous moves,
  perfect recall and a finite action set, plus some continuity
  requirements. As a consequence, the result in~\cite{FL83} does \emph{not} directly apply to LCL games. 
  Indeed, first, in LCL games the actions of far-away nodes are not observable. Second, the imperfect 
  information in~\cite{FL83} is solely related to the fact that players play simultaneously, while, again, in LCL games, 
  imperfect information also refers to the fact that each  node is not aware of the states of far away nodes in the network. 
Before showing how to extend~\cite{FL83} to LCL games, we discuss next  of the types of equilibria which should be 	sought  in the context of LCL games. 

%---------------------------------------------------------------------------
\subsection{Nash equilibria do not capture efficient distributed computation}
\label{sec:nash}
%---------------------------------------------------------------------------

We consider two important notions of Nash equilibria: the weakest form of equilibrium is the eponym Nash equilibrium while the strongest form of Nash equilibrium is the trembling-hand perfect equilibrium. The following result indicates that the basic notion of Nash equilibria is too weak as far as the design of \emph{efficient} algorithms for LCL tasks is concerned. 

\begin{proposition}\label{theo:nashisbad}
For every $k\geq 1$, there is an LCL game $\Gamma^{(k)}$ for a task solvable in at most $r$ rounds with probability at least $1-\frac{1}{2^r}$ for every $r\geq 1$, and a Nash equilibria for $\Gamma^{(k)}$ which systematically requires at least $k$ rounds to converge. 
\end{proposition}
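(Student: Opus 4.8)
The plan is to exhibit, for every $k$, a concrete LCL game together with an explicit strategy profile, and then to verify by hand that (i) the underlying task is easy and (ii) the profile is a Nash equilibrium that stalls for at least $k$ rounds. The cleanest witness is the MIS game on the single edge $\{u,w\}$, which is a valid greedily constructible LCL under the convention that a node may output $0$ only once a neighbor has output $1$. I would equip it with the preference function that rewards staying out of the independent set, namely $\pref = 1$ for the ball whose center has value $0$ and $\pref = 0$ for the ball whose center has value $1$, together with an arbitrary discount $\delta \in (0,1)$. The same underlying game serves for all $k$; only the exhibited equilibrium depends on $k$.

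First I would certify that the task itself is solvable quickly, so that the slowness to come is a pathology of the equilibrium and not of the task. Running the generic algorithm with each undecided node proposing to enter with probability $\frac12$, in any single round exactly one of the two nodes proposes with probability $\frac12$, which resolves the instance; hence the probability of not having resolved after $r$ rounds is at most $\frac{1}{2^r}$, giving the claimed $1-\frac{1}{2^r}$ guarantee (up to the single extra round in which the remaining node reads its neighbour and outputs $0$). For the slow equilibrium I would use the asymmetric profile in which $u$ never proposes to enter, i.e. it plays the ``stay undecided'' action with probability $1$ at every round, which is admissible since proposing nothing is always allowed, exactly as the fake colour $0$ is in the coloring instantiation; meanwhile $w$ plays ``stay undecided'' for rounds $1,\dots,k-1$ and proposes to enter from round $k$ on. Along the unique play of this profile nothing happens before round $k$, so each node's ball stays undecided and the while-loop keeps iterating; only from round $k$ does $w$ enter and $u$ subsequently output $0$. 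Thus this equilibrium systematically needs at least $k$ rounds.

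It then remains to check the two Nash conditions. For $u$: since $\pref$ equals $1$ exactly when $u$ ends outside the set and $0$ otherwise, $u$ obtains a positive payoff only by ending out, which happens precisely when $w$ ends in; as $w$ never proposes before round $k$, no strategy of $u$ can make $w$ enter earlier, and any deviation in which $u$ itself proposes risks entering the set and collecting payoff $0$, so waiting is a best response. For $w$: because $u$ never proposes, $u$ never enters, and therefore $w$ can \emph{never} end outside the set; every terminating outcome leaves $w$ inside with payoff $0$, and non-termination also yields $0$, so $w$ is completely indifferent and the prescribed delay until round $k$ is a (weak) best response. Hence the profile is a Nash equilibrium requiring at least $k$ rounds, and one sees that its only force is the weak domination exploited at $w$, precisely the degeneracy that the trembling-hand refinement later rules out. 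The main obstacle I anticipate is formal rather than conceptual: phrasing the above incentive arguments over arbitrary history-dependent behavioural strategies, not merely the two prescribed ones, and confirming within the formal model of the game that the ``stay undecided'' action is available at every node in every round, so that $u$'s indefinite waiting and $w$'s deferred entry are legitimate strategies.
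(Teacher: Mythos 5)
Your proof is correct and does establish the proposition as stated, but it takes a genuinely different route from the paper's. The paper does not use MIS: for each $k$ it builds a three-colour ``constrained coloring'' game on $K_2$ whose payoffs themselves depend on $k$ --- a radius-1 ball is good iff the two endpoints have distinct colours, good balls containing red pay $\delta^k$, the other good balls pay $2-\delta$ --- and exhibits the \emph{symmetric} slow equilibrium $(s^k,s^k)$ in which both players stubbornly play red (hence collide and stall) for the first $k$ rounds before mixing green/blue. Your construction is more economical: a single MIS game on $K_2$, with exactly the preference function the paper itself suggests for MIS ($\pref=1$ outside the set, $\pref=0$ inside), serves for all $k$, and the slow equilibrium is asymmetric, sustained by $u$'s ``never propose'' strategy, which makes $w$'s payoff identically zero (so any delay by $w$ is a weak best response), while always playing $0$ is a strict best response for $u$ against $w$'s delayed entry. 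Your incentive checks do cover arbitrary behavioural deviations, and your witness is in one sense stronger: one fixed game admits arbitrarily slow --- indeed never-converging --- Nash equilibria.

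The difference matters downstream, and it is what the paper's heavier construction buys. First, the paper's bad equilibrium is symmetric, i.e., it is an actual anonymous distributed algorithm gone wrong, which is the relevant object in this framework; yours requires the two players to behave differently. Second, and more importantly, Proposition~\ref{theo:theisgood} re-uses \emph{the same} games $\Gamma^{(k)}$ and asserts that their trembling-hand perfect equilibria converge fast; this holds there precisely because every good ball has strictly positive preference. Your game behaves in the opposite way: since entering the MIS pays exactly $0$, in every perturbed game playing $0$ strictly dominates playing $1$ (if the opponent enters this round with probability $q\geq\epsilon$, staying out yields at least $q\delta^{r}$, whereas entering yields at most $q\delta^{r+1}$), so every equilibrium of a perturbed game puts minimal weight on $1$, and the unique trembling-hand perfect equilibrium of your game is ``both never propose'', which never converges. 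In particular, your closing remark needs care: trembling-hand perfection does eliminate your equilibrium ($w$'s delayed entry is weakly dominated by always playing $0$), but in your game the equilibrium surviving the refinement is worse, not better. So your proof is a valid witness for the statement as written, but it could not be substituted for the paper's construction without falsifying Proposition~\ref{theo:theisgood}.
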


\begin{proof}
We show that this negative result holds even for a task on the 2-node graph $K_2$.
Recall that, in a strategic game, a strategy $s_i$ of player $i$ is
\emph{strongly dominated} by another strategy $t_i$ if, for every strategy
profile $s'_{-i}$ of the other players, the respective payoffs satisfy 
$
\pi_i(s_i,s'_{-i}) < \pi_i(t_i,s'_{-i}).
$
The strategy $s_i$ is  \emph{weakly dominated} by $t_i$ if
$\pi_i(s_i,s'_{-i}) \leq \pi_i(t_i,s'_{-i})$ for every strategy profile
$s'_{-i}$ and $\pi_i(s_i,s''_{-i}) < \pi_i(t_i,s''_{-i})$ for at least
one strategy profile $s''_{-i}$. Dominated strategies are undesirable, and it
is usually assumed that rational players never play dominated
strategies. Neverthless, while Nash equilibria cannot contain strongly
dominated strategies, they may contain weakly dominated
strategies. The strategic game defined in Fig.~\ref{game1}(left) is an example of such an issue. 
We call this game the \emph{constrained coloring} game. Each of the two players has choice between three 
frequencies, namely green ($G$), red ($R$), and blue $(B)$. To maximize their payoffs, the players should better have one playing blue, and the other playing green. 

\begin{figure}[htb]
  \centering
  \includegraphics[width=0.2\textwidth]{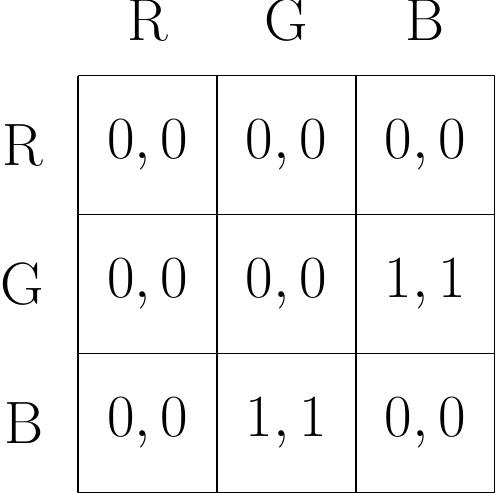}
  \hspace{2cm} \includegraphics[width=0.2\textwidth]{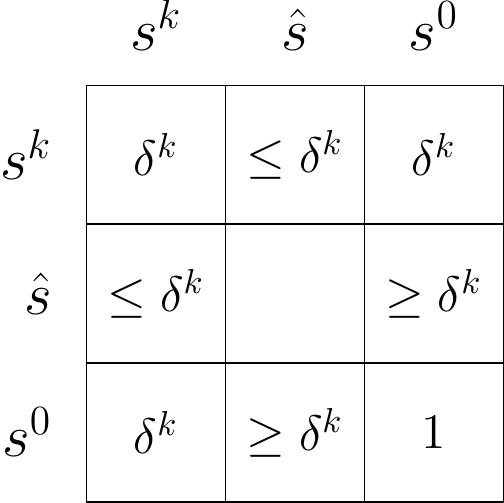}
    %\vspace*{-.3cm}
 \caption{\sl Left: The 2-player strategic game called \emph{constrained coloring}; Right: Payoffs in the LCL game induced by constrained coloring}
  \label{game1}
\end{figure}

The game defined in Fig.~\ref{game1}(left) has four Nash equilibria:
%\begin{itemize}
%\item 
$(R,R)$ with payoffs $(0,0)$,
%\item 
$(G,B)$ with payoffs $(1,1)$,
%\item
 $(B,G)$ with payoffs $(1,1)$,
%\item 
and $(\frac{G + B}{2},\frac{G + B}{2})$ with payoffs $(\frac{1}{2},\frac{1}{2})$.
%\end{itemize}
The Nash equilibrium $(R,R)$ is not realistic since the strategy $R$ is weakly dominated by both $G$ and $B$.
We show that such weakly dominated strategy are generally undesirable in LCL games since they can have dramatic consequences on the convergence time of the algorithm. 

Consider the LCL game played by two players linked by an
edge, inspired from the strategic game  constrained coloring of Fig.~\ref{game1}(left). 
Every node has the choice between three colors: green, blue, and red. A radius-1 ball is good if and only if the two nodes have different colors. However, the cost of the good balls differ. A good ball including color red has low preference. All other good balls have higher preferences.  Specifically, the two players have the same payoff
function depending on their colors. At the first round, the payoffs are

\vspace{-2ex}

\begin{itemize}
\item $(2-\delta,2-\delta)$ when one player is green and the other is blue,

\vspace{-2ex}

\item $(\delta^k,\delta^k)$ when one player is red and the other green or blue,
\end{itemize}

\vspace{-2ex}

\noindent where $\delta \in (0,1)$ is the discount factor of the game, and $k$ is a
positive integer parameter. In the case of a collision, another round of the
game is played, and the payoffs are discounted by a factor~$\delta$. 

The tree structure of the game is represented in Fig.~\ref{game_tree}, where the triangles are 
self-similar representation of the same tree, with payoff discounted by a multiplicative factor $\delta$. 
The tree is thus infinite. The figure also describes the different payoffs
after the first round. 

\begin{figure}[htb]
  \centering
  \includegraphics[width=0.7\textwidth]{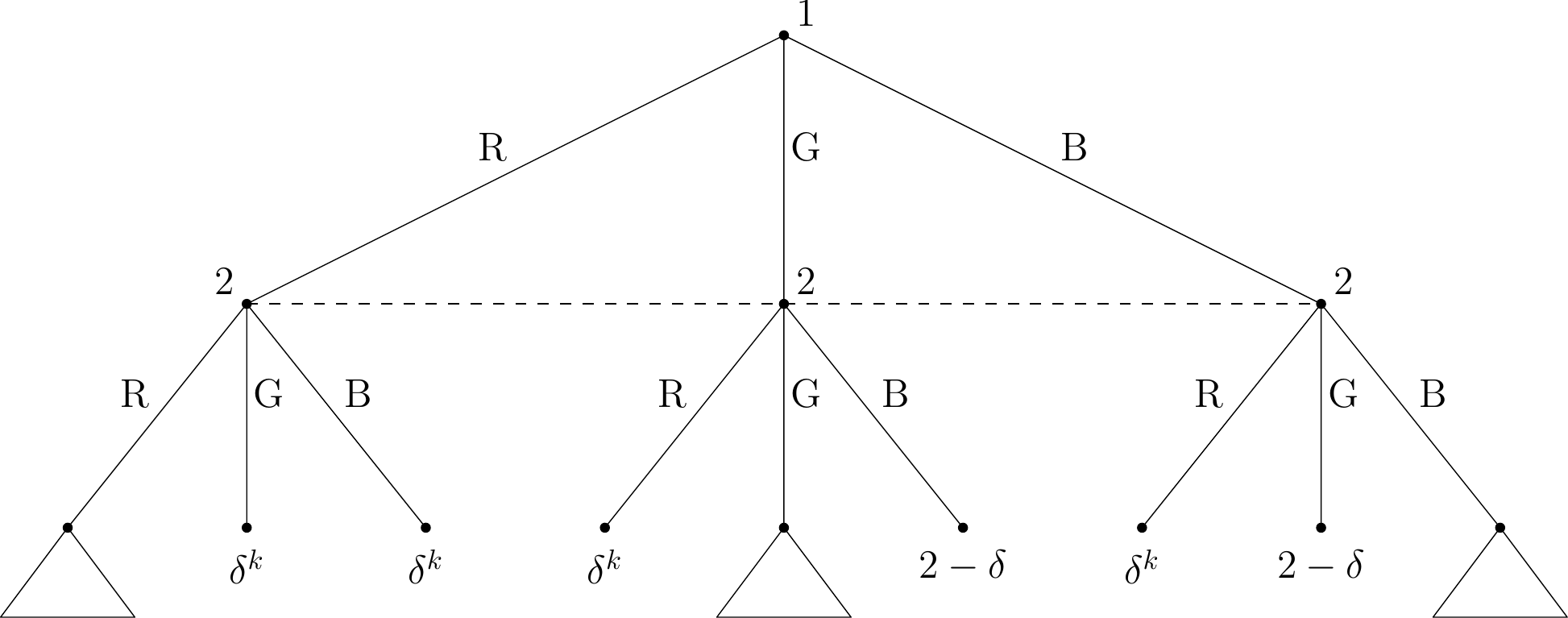}
  \caption{\sl Tree structure of the LCL game inspired by constrained coloring}
  \label{game_tree}
\end{figure}

As it can be read on Fig.~\ref{game_tree}, if the players choose their actions among $(R,G)$, $(R,B)$, $(G,R)$ and $(B,R)$,
 then the game ends after the first round, and each player receives a payoff of $\delta^k$. If the players choose their actions among $(G,B)$ and $(B,G)$, then the game also ends, but   each player receives a payoff of $2 - \delta$.
Finally, if the players choose their actions among $(R,R)$, $(G,G)$ and $(B,B)$, then game
  carries on, with payoffs  discounted by a factor~$\delta$.
  
In this game, since each player is fully informed of the history of the play after any given round,
a pure strategy of a player is  simply an infinite sequence of
actions. Let us denote by $S$ the set of all pure strategies. A mixed strategy of a player is
an infinite sequence of probability distributions over the set of actions
$\{R,G,B\}$. Let us denote by $\hat{S}$ the set of all mixed strategies. From Kuhn's
theorem~\cite{Kuh50}, it follows  that $\hat{S}$ is actually equal to the set of all probability
distributions over $S$. We will pay particular attention to the strategies
\[
s^l = (\underbrace{R,\dotsc, R}_{l}, \frac{G + B}{2}, \frac{G + B}{2}, \dotsc),
\] 
which consist in playing the action $R$ for the first $l$ rounds, and then
playing indefinitely the mixed strategy $\frac{G + B}{2}$. All sequences are
indexed from 0. For every sequence $s$, and for every positive integer $i$,
the sequence $s_{|i}$ denotes the sequence resulting from $s$ after removing the $i$
first actions, and reindexing the actions $(i, i+1, \dotsc)$ in $s_{|i}$ as $(0, 1, \dotsc)$.
Let us now compute the equilibria in the LCL game induced by constrained coloring. We denote by $\Pi: S^2 \mapsto \mathbb{R}$ the expected payoff function of the game (note that the players always receive the same payoff). 

\begin{fact}
  \label{claim:ex1} Let $s \in S$ be a pure strategy. If  $R$ is never played in $s$, then $\Pi(s^0,s) = \Pi(s,s^0) = 1$. Otherwise,  let $t$ be the
  first round in which the action $R$ is played in~$s$. We have 
   \[ \Pi(s^0,s) = \Pi(s,s^0) = 1
  - \left(\frac{\delta}{2}\right)^{t} (1 - \delta^k). \]
\end{fact}

\noindent Indeed, if the first action in $s$ is $R$, then the game ends in the
  first round, and $\Pi(s,s^0) = \delta^k$.  So, let us assume that the first action in $s$ is
  $G$ or $B$. Then, with probability $\frac{1}{2}$, the game ends, and each player
  gets payoff $2 - \delta$. Similarly, with probability $\frac{1}{2}$, another round is
  played. Since $s^0_{|1} =
  s^0$, we get the following recurrence relation: 
  \[ 
  \Pi(s,s^0)
  = \frac{1}{2}\left(2 - \delta\right)
  + \frac{1}{2}\ \delta\ \Pi(s_{|1},s^0). 
  \] 
  This recurrence yields
  \[
    \Pi(s,s^0) = \frac{2 - \delta}{2} \sum_{i = 0}^{t - 1} \left(\frac{\delta}{2}\right)^i + \left(\frac{\delta}{2}\right)^{t} \delta^k
    = \frac{2 - \delta}{2} \left(\frac{1 - (\frac{\delta}{2})^{t}}{1-\frac{\delta}{2}}\right) + \left(\frac{\delta}{2}\right)^{t} \delta^k
  = 1 - \left(1 - \delta^k\right) \left(\frac{\delta}{2}\right)^{t}
  \]
as desired. 

\begin{fact}
  \label{claim:ex2} Let $s \in S$ be a pure strategy. Let $t'$ be
  the first round at which action $G$ or $B$ is played in $s$, and let 
  $t''$ be the first round at which action $R$ is played in
  $s_{|k}$. 
    \[
    \Pi(s^k,s) = \Pi(s,s^k) = \left \{ \begin{array}{ll}
   \delta^{t' + k} & \mbox{if $t' < k $}\\
   \delta^{2 k} & \mbox{if $t' > k $}\\
   \delta^k \left(1 - \left(\frac{\delta}{2}\right)^{t''} (1 - \delta^k)\right) & \mbox{if $t' = k $}
  \end{array} \right .
  \]
\end{fact}

\noindent Indeed, if the first action in $s$ is $G$ or $B$, then the game ends
  in the first round and $\Pi(s,s^k) = \delta^k$. Let us assume that the first action in $s$
  is $R$, then the next round is played, and $\Pi(s,s^k)
  = \delta(s_{|1},s^k_{|1})$. Fact~\ref{claim:ex2} then follows from Fact~\ref{claim:ex1}. 

\medskip

Note that it was sufficient to establish the above facts for pure strategies since, 
for any two mixed strategies $\hat{s}_1$ and $\hat{s}_2$, there exist two pure strategies $s$ and $s'$ in the support of $\hat{s}_1$ such that $\Pi(s,\hat{s}_2) \leq \Pi(\hat{s}_1,\hat{s}_2) \leq \Pi(s',\hat{s}_2)$. The following fact follows from Fact~\ref{claim:ex2}, because $s^k$ is a best response to itself as it yields the highest  payoff $\delta^k$.

\begin{fact}
  The strategy profile $(s^k, s^k)$ is a Nash equilibrium.
\end{fact}

Finally, the following completes our analysis of the LCL game induced by constrained coloring. 

\begin{fact}\label{fact:last}
  The strategy $s^k$ is (weakly) dominated by the strategy $s^0$.
\end{fact}

\noindent To see why the fact holds,  let $s \in S$ be a pure strategy and $t$ be the first round in which the
  action $R$ is played in~$s$. Since $\delta<1$, we get 
$
  \delta^k \leq 1 - \left(1 - \delta^k\right) \left(\frac{\delta}{2}\right)^{t},
$
from which if follows from Fact~\ref{claim:ex1} that $\Pi(s^0,s) \geq \delta^k$.
Furthermore, from Fact~\ref{claim:ex2}, we get that
  $\Pi(s^k,s) \leq \delta^k$. By extension to mixed strategies, we have that, for
  every mixed strategy $\hat{s} \in \hat{S}$, we have 
  $\Pi(s^0,\hat{s}) \geq \Pi(s^k,\hat{s})$.
  It remains to show that there exists a mixed strategy $\hat{s} \in \hat{S}$
  such that $\Pi(s^0,\hat{s}) > \Pi(s^k,\hat{s})$. This is achieved for example
  by $\hat{s} = s^0$. Indeed by Fact~\ref{claim:ex1}, we have
  $
  \Pi(s^0,s^0) = 1 \; \mbox{and} \; \Pi(s^k,s^0) = \delta^k
  $
  which completes the proof of Fact~\ref{fact:last}. 

The above claims are summarized in Fig.~\ref{game1}(right), where $\hat{s}$
represents any mixed strategy not in $\{s^0,s^k\}$. 
The strategy profile $(s^k,s^k)$ is a Nash equilibrium. For this strategy profile,  the LCL game induced by constrained coloring  requires at least $k$ rounds to converge because both players systematically play action $R$ for the first $k$  rounds. On the other hand, The strategy profile $(s^0,s^0)$ allows convergence in constant expected number of rounds, and converges in at most $r$ rounds with probability at least $1-\frac{1}{2^r}$. 
\end{proof}

%---------------------------------------------------------------------------
\subsection{Trembling-hand perfect equilibria do capture efficient distributed computation}
\label{sec:tbheq}
%---------------------------------------------------------------------------

We have seen in Proposition~\ref{theo:nashisbad} that focussing solely on Nash equilibria in a broad sense  is not sufficient. Indeed, such equilibria can include weakly dominated strategies, and this may have a dramatic impact on the convergence time of the algorithm. Such problem can be eliminated by considering the stronger notion of  trembling-hand perfect equilibrium (a subset of Nash equilibria). Indeed, recall that a trembling-hand perfect equilibrium of a game $\Gamma$ is a Nash equilibrium which is the limit of Nash equilibria of perturbed games $\Gamma_\epsilon$, $\epsilon>0$, where $\Gamma_\epsilon$ is essentially the game $\Gamma$ but in which the players can set their strategies only up to some error $\epsilon$. It is known that trembling-hand perfect equilibria avoid pathological scenarios in general extensive games~\cite{FL83,Sel75}, mostly because these equilibria never include weakly dominated strategies, as a weakly dominated strategy can never be a best response to a completely mixed strategy profile. In other words, in the perturbed game $\Gamma_\epsilon$, all players must give a probability at least $\epsilon$ to every action. For instance, in the perturbed LCL game for MIS, no nodes can systematically propose not to be in the MIS, and must propose to enter the MIS with probability at least $\epsilon$ at each round. Similarly, in the perturbed LCL game for $(\Delta+1)$-coloring, every color must have a probability at least $\epsilon$ to be proposed. As a third example, one can show that, for the LCL games $\Gamma^{(k)}$, $k\geq 1$, whose existences are established in the proof of Proposition~\ref{theo:nashisbad}, the trembling-hand perfect equilibria yield algorithms converging quickly. More specifically, we have the following: 

\begin{proposition}\label{theo:theisgood}
For all the LCL games $\Gamma^{(k)}$, $k\geq 1$, on the 2-node graph $K_2$ whose existences are established in Proposition~\ref{theo:nashisbad}, the trembling-hand perfect equilibria yield algorithms converging in at most $r$ rounds, with probability at least $1-\frac{1}{2^r}$, for any $r\geq 1$. 
\end{proposition}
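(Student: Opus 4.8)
The plan is to exploit the defining feature of trembling-hand perfect equilibria recalled in Section~\ref{sec:tbheq}: they are limits, as $\epsilon\to 0$, of Nash equilibria of the perturbed games $\Gamma^{(k)}_\epsilon$ in which every action must be played with probability at least $\epsilon$ at every information set, and, crucially, they never contain weakly dominated strategies (a weakly dominated strategy can never be a best response to a completely mixed profile). The target is to show that every such limit equilibrium coincides, on the relevant part of the game tree, with the strategy $s^0=(\tfrac{G+B}{2},\tfrac{G+B}{2},\dots)$ already studied in the proof of Proposition~\ref{theo:nashisbad}, whose per-round termination probability is exactly $\tfrac12$, so that it converges within $r$ rounds with probability exactly $1-\tfrac{1}{2^r}$.

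First I would reduce the infinite game to a single stage game with a continuation value. Since $\Gamma^{(k)}$ on $K_2$ is self-similar after every collision (the three profiles $(R,R),(G,G),(B,B)$ lead to an identical subgame discounted by $\delta$), I would let $W$ denote the equilibrium continuation value at the start of a round and write the expected payoffs of the three actions $R,G,B$ against an opponent mixing $(r',g',b')$, each of the form (immediate payoff) $+\ \delta W\cdot(\text{collision probability})$. Using the terminal payoffs from Proposition~\ref{theo:nashisbad} ($2-\delta$ for $(G,B)/(B,G)$, $\delta^k$ for any profile involving $R$, and collision otherwise), this gives closed forms $U_G,U_B,U_R$ in terms of $(r',g',b')$ and $W$.

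Next I would analyze the symmetric Nash equilibria of the perturbed game. The indifference computation shows that whenever $G$ and $B$ are both played above the forced minimum, one must have $g=b$: the equation $U_G=U_B$ reduces to $(g-b)\bigl(\delta W-(2-\delta)\bigr)=0$, and the second factor is nonzero because $\delta W\le\delta<2-\delta$. It then remains to pin down the weight of $R$, and here I would invoke that $R$-heavy play is weakly dominated: this is exactly the phenomenon isolated by Fact~\ref{fact:last}, where $s^0$ weakly dominates $s^k$ because $s^0$ guarantees payoff at least $\delta^k$ while a strategy leaning on $R$ earns at most $\delta^k$. Consequently $R$ cannot be a best response to the completely mixed perturbed profile once the opponent's non-$R$ mass concentrates near $\tfrac12$–$\tfrac12$, which forces the equilibrium weight of $R$ down to the minimum $\epsilon$ and the weights of $G,B$ to $\tfrac{1-\epsilon}{2}$ each. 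Letting $\epsilon\to0$ recovers $s^0$ for both players, so the per-round collision probability $r^2+g^2+b^2$ tends to $\tfrac12$, yielding convergence within $r$ rounds with probability at least $1-\tfrac{1}{2^r}$; the asymmetric (pure anti-coordination) trembling-hand perfect equilibria $(G,B)$ and $(B,G)$ terminate in a single round and a fortiori satisfy the bound.

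The main obstacle I anticipate is the tension between the infinite horizon and the domination argument: I must ensure that $R$ receives only the forced weight $\epsilon$ at \emph{every} round simultaneously, not merely at the first one, and that the self-consistent continuation value $W$ indeed tends to $1$ along the sequence of perturbed equilibria, so that the domination of $R$ is uniform. This requires solving the fixed-point equation for $W$ inside each perturbed game and checking that no second, $R$-heavy symmetric equilibrium survives as $\epsilon\to0$ — equivalently, that the only stationary solution consistent with $U_R\le U_G$ is the one with vanishing $R$. Justifying the interchange of the limit $\epsilon\to0$ with the equilibrium (fixed-point) conditions is the delicate technical point; once it is in place, the convergence bound follows immediately from the geometric tail of a per-round termination probability equal to $\tfrac12$.
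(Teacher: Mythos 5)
Your proposal is correct and follows essentially the same route as the paper: both arguments rest on the observation that, in the perturbed games $\Gamma_\epsilon$, the weakly dominated action $R$ (Fact~\ref{fact:last}) can never be a best response to a completely mixed profile and is therefore forced down to its minimal probability, so every trembling-hand perfect equilibrium is a limit of profiles concentrating on $\{G,B\}$, i.e., the symmetric $50$--$50$ mix $s^0$ or pure anti-coordination, each of which terminates in a given round with probability at least $\frac{1}{2}$. The paper phrases this geometrically (equilibria of $\Gamma_\epsilon$ lie on the bold minimal-$R$ line of Fig.~\ref{examplethe}, which converges to the segment $GB$) where you carry out the algebra with a continuation value $W$; the infinite-horizon and limit-interchange subtleties you flag are glossed over by the paper's own proof as well (and note only that your intermediate bound $\delta W\leq\delta$ should be replaced by $\delta W\leq\delta(2-\delta)<2-\delta$, since payoffs can exceed~$1$).
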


\begin{proof}
Fig.~\ref{examplethe} illustrates the notion of trembling-hand perfect equilibria for the LCL game constrained coloring defined in the proof of Proposition~\ref{theo:nashisbad}. In Fig.~\ref{examplethe}, any point in the triangle BRG represents a distribution $D$ of probability in the set $\{B,R,G\}$ of colors. In $\Gamma_\epsilon$, these distributions are bounded to be at least $\epsilon$ away from the boundaries, hence they are inside the triangle with dotted line borders. In $\Gamma_\epsilon$, all equilibria correspond to strategy profiles where the strategy of every player stands on the bottom line marked in bold in this latter triangle. Hence, the Nash equilibrium  $(s^k,s^k)$ cannot be the limit, when $\epsilon$ goes to~0, of equilibria in $\Gamma_\epsilon$.  Indeed, the corner R cannot be the limit of points in the bold line, as this line  actually converges to the line GB. Therefore, the undesirable Nash equilibrium  $(s^k,s^k)$ is not a trembling-hand perfect equilibrium of the LCL game for  constrained coloring. 

\begin{figure}[htb]
  \centering
  \includegraphics[width=0.35\textwidth]{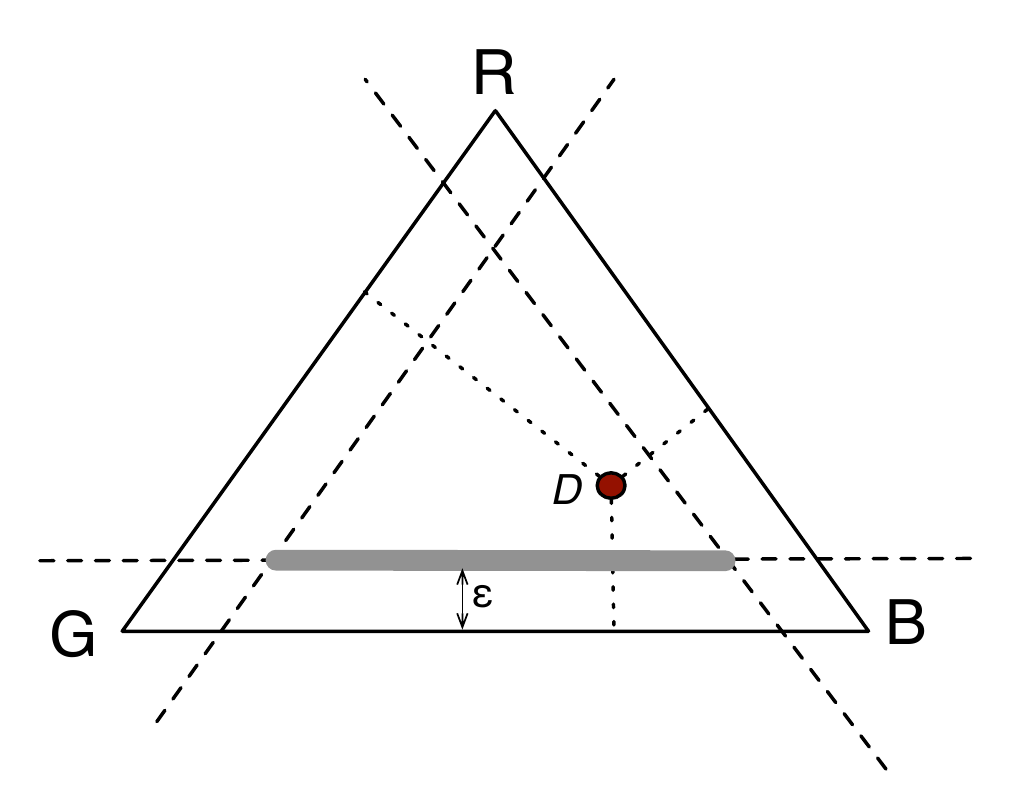}
  \vspace*{-.7cm}
  \caption{Illustration of trembling hand perfect equilibrium}
  \label{examplethe}
\end{figure}

On the other hand, we know from Fact~\ref{fact:last} that $s^k$ is weakly dominated by $s^0$. In particular, the strategy profile $(s^0,s^0)$ is a trembling-hand perfect equilibrium for the game stated in Proposition~\ref{theo:nashisbad}, and this strategy profile allows the game to converge in a constant expected number of rounds. 
\end{proof}

In general, in a perturbed LCL game, no player can block the game for an arbitrarily long time by strictly refusing to select a solution that does not please it. Now, can we always guarantee the existence of trembling hand perfect equilibria in LCL games? In the next sections, we show that the answer to this question is positive. 

%%%%%%%%%%%%%%%%%%%%%%%%%%%%%%%%%%%%%%
\section{Existence of trembling-hand perfect equilibria}
%%%%%%%%%%%%%%%%%%%%%%%%%%%%%%%%%%%%%%

In this section, for the purpose of establishing the existence of  trembling-hand perfect equilibria for LCL-games,  we extend the result of~\cite{FL83} by allowing actions of other players be not observable, which also requires to generalize the notion of rounds as well as the notion of imperfect information, such as used in~\cite{FL83}. 

\medskip

\noindent\textbf{Lemma 1} 
~\textit{Every infinite, continuous, measurable, well-rounded, symmetric extensive game with perfect recall and finite action set has a symmetric trembling-hand perfect equilibrium.}

\medskip

The rest of the section is dedicated to the proof of Lemma~1. We start by formally defining the model. 

%----------------------------------------------------------------
\subsection{The Model}
%----------------------------------------------------------------

We focus our attention on \emph{extensive games} with imperfect information. Our objective is to include \emph{infinite horizon} in the analysis of such games. Recall that an extensive game $\Gamma$ is a tuple $(N, A, X, P, U, p, \pi)$, where:

\begin{itemize}

\item $N = \{1,\dots,n\}$ is the set representing the \emph{players} of the game. An additional player, denoted by $c$, and called \emph{chance}, is modeling all random effects that might occur in the course of the game.

\item $A$ is the (finite) \emph{action set}, i.e., a finite set representing the actions that can be made by each player when she has to play.

\item $X$ is the \emph{game tree}, that is, a subset of $A^* \cup A^{\omega}$ where $A^*$ (resp., $A^{\omega}$) denotes the set of finite (resp., infinite) strings with elements in $A$, satisfying the following properties:

\begin{itemize}
\item the empty sequence $\varnothing\in X$;
\item $X$ is stable by prefix;
\item if $(a_i)_{i= 1,\dots,k}\in X$ for every $k\geq 1$, then $(a_i)_{i\geq 1}\in X$.
\end{itemize}

The set $X$ is partially ordered by the prefix relation, denoted by
$\preceq$, where $x \preceq y$ means that $x$ is a prefix of $y$, and
$x \prec y$ means that $x$ is a prefix of $y$ distinct from $y$. The
elements of $X$ are called \emph{histories}. A history $x$ is
\emph{terminal} if it is a prefix of no other histories in $X$. In
particular, every infinite history in $X$ is terminal. The set of
terminal histories is denoted by $Z$. If the longest history is finite
then the game has \emph{finite horizon}, otherwise it has \emph{infinite horizon}.
For every non-terminal history $x$, we denote by
$$A(x) = \{a \in A : (x,a) \in X\}$$ the set of available actions
after the history $x$.

\item $P$ is the \emph{player partition}, i.e., a function
  $P : X \setminus Z \rightarrow N \cup \{c\}$ that assigns a player
  to each non-terminal history. $P(x)$ is the player who has to play
  after the history~$x$. The sets
  $P_i = \{x \in X \setminus Z : P(x) = i\}$, $i\in N\cup \{c\}$,
  called \emph{player sets}, form a partition of $X \setminus Z$.

\item $U$ is the \emph{information partition}, that is, a refinement of the
  player partition whose elements are called \emph{information sets} such that
  for every $u \in U$, and for every two histories $x, y$ in this
  \emph{information set}~$u$, we have $A(x) = A(y)$, i.e., the sets of available
  actions after $x$ and $y$ are identical.  We can therefore define $A(u)$ as
  the set of actions available after the information set $u$. Formally,
  \[A(u)=\{a\in A : (x,a) \in X \; \mbox{for every} \; x\in u\}.\] For every
  history $x$, the information set containing $x$ is denoted by $u(x)$. We also
  define $P(u)$ as the player who has to play after the information set $u$ has
  been reached, and for every player $i$, the set
  $U_i = \{u \in U : P(u) = i\}$.  $\{U_i, i \in N \cup \{c\}\}$ forms a
  partition of $U$. Information sets regroup histories that are
  indistinguishable to players. Since the player chance is not expected to
  behave rationally, we will simply put $U_c = \{\{x\}, x \in P_c\}$.

\item $p$ is a function that assigns to every history $x$ in $P_c$ (the player
  set of the chance) a probability distribution over the set $A(x)$ of available
  actions after the history $x$. This \emph{chance function} $p$ is supposed to
  be common knowledge among the players.

\item $\pi$ is the \emph{payoff function}, that is,  $\pi : Z \rightarrow \mathbb{R}^n$ assigns the payoff (a real value) to every player in $N$ for every terminal history of the game. We assume that every payoff is in $[-M, M]$ for some $M \geq 0$.

\end{itemize}

\noindent We introduce the concept of \emph{rounds} in extensive games as follows. 

\begin{definition}
  The \emph{round function} $r$ of an extensive game assigns a positive integer
  to every non terminal history $x$, defined by $r(x) = |Rec(x)|$ where
  $ Rec(x) = \{x' \in X\ |\ x' \prec x \text{ and } P(x') = P(x) \}.$ We call
  $r(x)$ the \emph{round} of $x$. The round of a finite terminal history is the
  round of its predecessor, and the round of an infinite history $z$ is
  $r(z) = \infty$. An extensive game $\Gamma$ for which the round function is
  non decreasing with respect to the prefix relation, i.e
  $y \preceq x \Rightarrow r(y) \leq r(x)$, is said to be \emph{well-rounded}.
\end{definition}

Note that not every game is well-rounded, because two histories $x$ and $y$ such
that $x \preceq y$ do not necessarily verify $P(x) = P(y)$. In a well-rounded
game, since $r$ is non decreasing, we have that for any non terminal history
$x$, every player has played at most $r(x) + 1$ times before $x$. Moreover, every player
which has played less than $r(x)$ times before $x$ will never play again after $x$.

Let $u \in U_i$ and $u' \in U_i$ be two (non necessarily distinct) information sets of the same player $i$, for which there
exist $x \in u$, $x' \in u'$, and $a \in A(u')$, such that $(x',a) \preceq x$. 
Recall that an extensive game is said to have \emph{perfect recall} if, for every such $i$,
$u$, $u'$ and $a$, we have: 
\[
\forall y \in u, \exists y' \in u' \mid (y', a)\preceq y.
\]
The following lemma will allow us to safely talk about the round of an information set.

\begin{lemma}
  \label{round_coherence}
  Let $\Gamma$ be an extensive game with perfect recall, and let $x\in X$ and
  $x' \in X$ be two non terminal histories in the same information set~$u\in
  U$. Then $x$ and $x'$ have the same round.
\end{lemma}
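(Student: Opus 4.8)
The plan is to reduce the statement to a single structural fact about perfect recall, namely that no information set can contain two histories one of which is a prefix of the other, and then to transport the prior moves of the active player from $x$ to $x'$ by a perfect-recall argument, obtaining a bijection between the relevant sets of prefixes.

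First I would record the elementary observations that a non-terminal history is a finite string (every infinite history is terminal), so its set of prefixes is finite and totally ordered by $\preceq$; in particular $Rec(x)$ is a finite $\preceq$-chain for every non-terminal $x$. Since $x$ and $x'$ lie in the same information set $u$, they are handled by the same player $i = P(u)$, so $Rec(x)$ and $Rec(x')$ both collect exactly the strict prefixes at which player $i$ moves, and the goal is precisely $|Rec(x)| = |Rec(x')|$.

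The crucial preliminary lemma I would establish is: in a game with perfect recall, no information set contains two comparable histories $p \prec q$. I would argue by contradiction. If $p, q$ lie in the common information set $u(p) = u(q)$, let $a$ be the action taken after $p$ on the way to $q$, so that $(p,a) \preceq q$ and $a \in A(u(p))$. Applying perfect recall with both information sets taken equal to $u(p)$ and witnessing pair $(p,a)\preceq q$ produces a history $p_1 \in u(p)$ with $(p_1,a)\preceq p$, hence $p_1 \prec p$; iterating yields an infinite strictly $\prec$-descending chain $\cdots \prec p_2 \prec p_1 \prec p$, which is impossible because prefix length strictly decreases at each step. I expect the main obstacle of the whole proof to be concentrated here: this ``no repeated information set along a path'' property is exactly what makes the round well defined, and getting the perfect-recall bookkeeping right (choosing the correct pair of information sets and the witnessing action, and closing the loop to reach a contradiction with finiteness) is the delicate part.

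With this lemma in hand, I would build an injection $f : Rec(x) \to Rec(x')$. For $y \in Rec(x)$, let $a_y$ be the unique action with $(y,a_y)\preceq x$; since $a_y \in A(u(y))$ and $x, x' \in u$, perfect recall guarantees some $y' \in u(y)$ with $(y',a_y)\preceq x'$, and the preliminary lemma forces $y'$ to be unique, as any two such candidates would be comparable prefixes of $x'$ lying in the common information set $u(y)$. Setting $f(y)=y'$ then gives a well-defined map into $Rec(x')$, and it is injective: if $f(y_1)=f(y_2)=y'$ then $u(y_1)=u(y')=u(y_2)$, so $y_1$ and $y_2$ would be two comparable prefixes of $x$ in a common information set, again contradicting the lemma unless $y_1=y_2$. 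This yields $|Rec(x)|\le|Rec(x')|$, and running the symmetric construction with the roles of $x$ and $x'$ exchanged gives the reverse inequality, so $r(x)=|Rec(x)|=|Rec(x')|=r(x')$, as required.
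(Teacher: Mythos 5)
Your proof is correct and follows essentially the same route as the paper's: an infinite-descent argument via perfect recall showing that no information set can contain two comparable prefixes (the paper states this for two elements of $Rec(y)\cap u$ with $y$ finite, which is the same fact), then an injection from $Rec(x)$ into $Rec(x')$ obtained by transporting the witnessing action through perfect recall, with injectivity following from that preliminary lemma, and finally symmetry to get the reverse inequality. The only cosmetic differences are that you additionally check uniqueness of the image $y'$ (which the paper skips, since well-definedness can be handled by an arbitrary choice) and that you avoid the paper's small case analysis on $Rec(x)$ being empty; neither changes the substance.
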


\begin{proof}
We first observe the following. Let $\Gamma$ be an extensive game with perfect recall, and let $y\in X$ be a finite history. Let $y'\in X$ and $y''\in X$ for which there is $u\in U$ such that $y'\in Rec(y)\cap u$ and $y''\in Rec(y)\cap u$.  Then $y'=y''$. 
Indeed, since both $y'$ and $y''$ are in $Rec(y)$, we have that both are prefixes of $y$, and thus one of the two is a prefix of the other. Assume, w.l.o.g., that 
  $y'' \prec y' \prec y$ (as, if $y'=y''$ then we are done). Let $a$
  be the action such that $(y'', a) \preceq y'$. Since the game has perfect
  recall, there must exist a history $y'''\in u$ such that
  $(y''', a) \preceq y''$. Thus $y'''\prec y'' \prec y' \prec y$. We can repeat the same reasoning for $y'''$ and $y''$ as we did for $y''$ and $y'$, so on and so forth. In this way, we construct an
  infinite strictly decreasing sequence of histories, which
  contradicts the fact that $y$ is finite.
  
  If both $Rec(x)$ and $Rec(x')$ are empty, then $x$ and $x'$ have the same round. 
  Assume, w.l.o.g., that  $Rec(x)\neq \varnothing$, and let $y \in Rec(x)$. Let $a$ be the action such
  that $(y, a) \preceq x$. Since the game has perfect recall, there exists
  $y' \in u(x')$ such that $(y',a) \preceq x'$. Therefore $y' \prec x'$ and
  $P(y') = P(y) = P(x) = P(x')$. It follows that  $y' \in Rec(x')$. Thus, for any $y\in Rec(x)$, we have found an $y'\in Rec(x')$. This mapping from $Rec(x)$ to $Rec(x')$ is one to one. Indeed, let $y_1$ and $y_2$ in $Rec(x)$, and let $y'_1$ and $y'_2$ in $Rec(x')$ be the corresponding histories. If $y'_1=y'_2$, then, since $u(y_1)=u(y'_1)$ and $u(y_2)=u(y'_2)=u(y'_1)$, we get that $y_1\in Rec(x)\cap u(y'_1)$, $y_2\in Rec(x)\cap u(y'_1)$. It follows from 
 the above observation that $y_1=y_2$. Thus the mapping is one-to-one, and hence $r(x) \leq r(x')$. It follows that we also have $Rec(x')\neq\varnothing$. Therefore, 
  we can apply the same reasoning by switching the roles of $x$ and $x'$, which yields $r(x')\leq r(x)$. Thus
  $r(x) = r(x')$.
\end{proof}

%The following lemma shows that well-rounded games  have a limited number of information sets. 
%
%\begin{lemma}
%  \label{countable}
%  A well-rounded  extensive game with perfect recall has countably many  information sets.
%\end{lemma}
%
%\begin{proof}
%  There are at most $|A|^{n r}$ histories of round~$r$ since, in every history of round~$r$ in a well-rounded game,  every player has played at most $r$ times. 
%  Therefore, there are finitely many information sets of round~$r$. 
%  It follows that one can enumerate the information sets by increasing round.
%\end{proof}

%In the scope of this paper, an extensive game will always refer to an extensive
%game with perfect recall, well-rounded and with a finite action set.

%----------------------------------------------------------------
\subsection{Strategies, Outcomes and Expected Payoff}
\label{subsection:outcomes}
%----------------------------------------------------------------

In this section, we first recall several basic concepts about extensive games with perfect recall. Without loss of generality, we restrict our attention to \emph{behavioral strategies} since such strategies 
are outcome-equivalent to mixed strategies thanks to Kuhn's theorem~\cite{Kuh53}. The main objective of this section is to define the 
  \emph{expected payoff function}, which is novel for it is adapted to the infinite games. 

Recall that the \emph{local strategy} $b_{i,u}$ of a player $i$ given the information
sets $u$ is a probability distribution over the set $A(u)$ of actions available
given this information set. The set of local strategies of player $i$ for $u$ is denoted by $B_{i,u}$.
The  \emph{behavioral strategy} $b_i$ of a player $i$ is a function which assigns a
local strategy $b_{i,u}$ to every information set $u$ of this player. The set of
all behavioral strategies of player $i$ is denoted by $B_i$.
A \emph{strategy profile} is a $n$-tuple of behavioral strategies, one for each
player. The set of all strategy profiles is
$B = \times_{i\in N} B_i$. For each player~$i$,  we denote by $B_{-i}$ the set $\times_{j \neq i}B_j$. Since
$B = B_i \times B_{-i} = \times_{i \in N} B_i$,  a
strategy profile $b$ can be identified in these different ways as 
$b = (b_i, b_{-i}) = (b_1, b_2, \dots, b_n)$.
When every player plays according to a strategy profile $b$, the outcome of the
game in entirely determined, in the sense that every history $x$ has a
probability $\rho_b(x)$ of being reached. For any strategy profile $b$ and any
history $x = (a_i)_{i = 1,\dots,k}$ where $k\in \mathbb{N}\cup \infty$, the
\emph{realization probability} of $x$ is defined by $\rho_b(\varnothing) = 1$,
and
  \[\rho_b(x) = \prod_{i=0}^{k-1} b_{P(x_i),u(x_i)}(a_{i+1})\]
  where $x_0 = \varnothing$, and, for every positive $i \leq k$,
  $x_i = (a_1, a_2, \dots, a_i)$. When the player is the chance, we simply
  identify its strategy with the chance function $p$:
  $P(x_i) = c \Rightarrow b_{P(x_i),u(x_i)}(a_{i+1}) = p(x_i, a_{i+1})$
The function $\rho_b : X \rightarrow [0, 1]$ is called the \emph{outcome} of
the game under the strategy profile $b$. An outcome
$\rho : X \rightarrow [0, 1]$ is \emph{feasible} if and only if there exists a strategy
profile $b$ such that $\rho = \rho_b$. The set of feasible outcomes of $\Gamma$
is denoted by $O$.

We are now ready to define the \emph{expected payoff}.  Note that in a game with
infinite horizon, there can be uncountably many terminal histories. Therefore
the definition of the probability measure on $Z$ requires some care.
For any finite history $x$, let $Z_x$ be the set $\{z \in Z\ |\ x \preceq
z\}$. Note that $Z_x$ might be uncountable. Let $\Sigma$ be the $\sigma$-algebra
on $Z$ generated by
 %all singletons $\{z\}, z \in Z$ and
all sets of the form $Z_x$ for some finite history $x$.
For each strategy profile $b$, the measure $\mu_b$ on $\Sigma$ is defined by:
for every set $Z_x$, $\mu_b(Z_x) = \rho_b(x)$.
%  \begin{itemize}
%  \item for every terminal history $z$, $\mu_b(z) = \rho_b(z)$, and 

%  \end{itemize}
This definition ensures that $\mu_b$ is a probability measure because
$\mu_b(Z) = \mu_b(Z_{\varnothing}) = \rho_b(\varnothing) = 1$.

\begin{definition}
  Let $\pi$ be a  payoff function that is measurable on $\Sigma$. The
  \emph{expected payoff function} $\Pi$ assigns a real value $\Pi(b)$
  to every strategy profile $b \in B$, defined by 
  \[\Pi(b) = \int_\Sigma \pi\ d\mu_b~.\]
\end{definition}

\noindent Note that each component of the expected payoff function is bounded by
$M$, where $M$ is the upper bound on every payoff.
A game $\Gamma$ whose payoff function $\pi$ is measurable on $\Sigma$ is said to
be a \emph{measurable game}. In the following, we always assume that the
considered games are measurable.

%----------------------------------------------------------------
\subsection{$\epsilon$-Equilibria and Subgame Perfection}
%----------------------------------------------------------------

We now show how to adapt  the standard notion of $\epsilon$-equilibria (cf., e.g., Radner \cite{Rad80})  to 
  infinite games (again, Nash equilibrium is a special case of $\epsilon$-equilibrium, with $\epsilon = 0$). 
Recall that a strategy profile $b$ is a $\epsilon$-equilibrium if and only if, for every player $i$, and
  every behavior strategy $b_i' \in B_i$ of this player, we have 
  $\Pi_i(b_i',b_{-i}) - \Pi_i(b) \leq \epsilon$.
  Similarly, we recall the notions of \emph{subgames} and \emph{subgame perfect
    equilibria} (see, e.g., \cite{Sel65}).  A subtree $X'$ of $X$ is said to be
  \emph{regular} if no information sets contain both a history in $X'$ and a
  history not in $X'$. To each regular subtree $X'$ is associated a game
  $\Gamma'=(N, A, X', P', U', p', \pi')$, where $P'$, $U'$, $p'$ and $\pi'$ are
  the restrictions of $P$, $U$, $p$ and $\pi$ to $X'$, called a \emph{subgame}.
  The notions of outcomes and expected payoff functions for subgames follow
  naturally. Specifically,

\begin{definition}
  A strategy profile $b$ is a \emph{subgame perfect $\epsilon$-equilibrium} of
  an infinite game $\Gamma$ if and only if, for every subgame $\Gamma'$, the
  restriction of $b$ to $\Gamma'$ is an $\epsilon$-equilibrium.
\end{definition}

\noindent Note that a subgame perfect $\epsilon$-equilibrium of $\Gamma$ is an
$\epsilon$-equilibrium since $\Gamma$ is a subgame of itself.

%----------------------------------------------------------------
\subsection{Metrics}
\label{subsection:metrics}
%----------------------------------------------------------------

In this section, we now define specific metrics on the set~$O$ of feasible outcomes, and on the set
of behavior strategy profiles. These definitions are inspired from~\cite{FL83}, with adaptations to
  fit our infinite setting. 

\begin{definition}
  \label{outcomes_metric}
  Let $\rho^1, \rho^2 \in O$ be two feasible outcomes of the same extensive
  game $\Gamma$. We define the following metric $d$ on $O$:
  \[d(\rho^1, \rho^2) = \underset{\substack{x \in X\\ x\ \text{finite}}}{\sup}
    2^{-r(x)} \cdot \left| \rho^1(x) - \rho^2(x) \right| \] where $r(x)$ is the
  round of the finite history $x$.
\end{definition}

\begin{lemma}
The function $d:O\times O \to \mathbb{R}$ of Definition~\ref{outcomes_metric} is indeed a metric. 
\end{lemma}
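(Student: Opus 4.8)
The plan is to verify directly the five defining properties of a metric: well-definedness (finiteness), non-negativity, symmetry, the identity of indiscernibles, and the triangle inequality; four of these are routine, and I expect the last interesting point to be the identity of indiscernibles. Finiteness and non-negativity come first and are immediate: every feasible outcome takes values in $[0,1]$, so $|\rho^1(x)-\rho^2(x)|\le 1$ for each finite history $x$, and since $r(x)\ge 1$ we have $2^{-r(x)}\le \tfrac12$; hence each term of the supremum lies in $[0,\tfrac12]$ and $d(\rho^1,\rho^2)\in[0,1]$ is a well-defined non-negative real. Symmetry is equally clear, since $|\rho^1(x)-\rho^2(x)|=|\rho^2(x)-\rho^1(x)|$ termwise.

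For the triangle inequality I would argue termwise and then pass to the supremum. Fixing a finite history $x$, the ordinary triangle inequality for the absolute value gives $2^{-r(x)}|\rho^1(x)-\rho^3(x)| \le 2^{-r(x)}|\rho^1(x)-\rho^2(x)| + 2^{-r(x)}|\rho^2(x)-\rho^3(x)|$, and each summand on the right is bounded by the corresponding supremum, i.e.\ by $d(\rho^1,\rho^2)$ and $d(\rho^2,\rho^3)$ respectively. Since this bound holds for every finite $x$, taking the supremum on the left yields $d(\rho^1,\rho^3)\le d(\rho^1,\rho^2)+d(\rho^2,\rho^3)$.

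The only step requiring care — and the one I expect to be the crux — is the identity of indiscernibles, namely that $d(\rho^1,\rho^2)=0$ forces $\rho^1=\rho^2$ as functions on all of $X$. From $d(\rho^1,\rho^2)=0$ every term of the supremum vanishes, and since the weight $2^{-r(x)}$ is strictly positive for finite $x$ (as $r(x)$ is then a finite positive integer), we get $\rho^1(x)=\rho^2(x)$ for every finite history $x$. The subtlety is that the supremum ranges only over finite histories, whereas an outcome is a function on the whole of $X$, including the infinite terminal histories. To bridge this gap I would invoke the structure of the realization probability: for an infinite history $z=(a_i)_{i\ge1}$, the value $\rho_b(z)$ is the infinite product of the local probabilities along $z$, which, the factors lying in $[0,1]$, is the limit of the non-increasing sequence of finite partial products $\rho_b(x_k)$ over the finite prefixes $x_k=(a_1,\dots,a_k)$. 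Agreement of $\rho^1$ and $\rho^2$ on all finite histories gives $\rho^1(x_k)=\rho^2(x_k)$ for every $k$, so the two convergent sequences coincide and hence $\rho^1(z)=\rho^2(z)$. Thus $\rho^1=\rho^2$ everywhere, which together with the previous paragraphs establishes that $d$ is a metric.
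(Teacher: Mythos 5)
Your proof is correct and follows essentially the same route as the paper: a termwise triangle inequality passed to the supremum, and separation handled by first deducing agreement on all finite histories (positivity of the weights) and then extending to infinite histories via the fact that the realization probability of an infinite history is the limit of those of its finite prefixes. The only nitpick is your claim that $r(x)\geq 1$ for finite histories — rounds in this paper start at $0$, so the weight $2^{-r(x)}$ can equal $1$ — but this is immaterial since all that is needed is that the weights are positive and bounded by $1$.
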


\begin{proof}
  We first show that $d$ satisfies the triangle inequality. Let $\rho^1$,
  $\rho^2$ and $\rho^3$ be three feasible outcomes of $\Gamma$. For any finite
  history $x$ we  have the following:
  \begin{eqnarray*}
  &  |\rho^1(x)-\rho^3(x)| &\leq |\rho^1(x)-\rho^2(x)| + |\rho^2(x)-\rho^3(x)| \\
   \Rightarrow &  2^{-r(x)} |\rho^1(x)-\rho^3(x)| &\leq 2^{-r(x)} |\rho^1(x)-\rho^2(x)| + 2^{-r(x)} |\rho^2(x)-\rho^3(x)|\\
    \Rightarrow &    \underset{\substack{x \in X\\ x\ \text{finite}}}{\sup} 2^{-r(x)} |\rho^1(x)-\rho^3(x)| &\leq \underset{\substack{x \in X\\ x\     \text{finite}}}{\sup} \left(2^{-r(x)} |\rho^1(x)-\rho^2(x)| + 2^{-r(x)} |\rho^2(x)-\rho^3(x)|\right)\\
   \Rightarrow &     \underset{\substack{x \in X\\ x\ \text{finite}}}{\sup} 2^{-r(x)} |\rho^1(x)-\rho^3(x)| &\leq \underset{\substack{x \in X\\ x\ \text{finite}}}{\sup} 2^{-r(x)} |\rho^1(x)-\rho^2(x)| + \underset{\substack{x \in X\\ x\ \text{finite}}}{\sup} 2^{-r(x)} |\rho^2(x)-\rho^3(x)|\\
    \Rightarrow &    d(\rho^1, \rho^3) &\leq d(\rho^1, \rho^2) + d(\rho^2, \rho^3)
  \end{eqnarray*}
  Next, we prove that $d$ separates different outcomes. Let $\rho^1$ and $\rho^2$
  be two feasible outcomes such that $d(\rho^1, \rho^2) = 0$. By definition, this
  implies that, for every finite history $x$, we have $\rho^1(x) =
  \rho^2(x)$. Let $b^1$ and $b^2$ be two strategy profiles such that
  $\rho^1 = \rho_{b^1}$ and $\rho^2 = \rho_{b^2}$. Let
  $x = (a_k)_{k \geq 1}$ be an infinite history, and, for $k\geq 1$, let
  $x_k = (a_1, a_2, \dots ,a_k)$ be the corresponding increasing sequence of its prefixes. By
  definition, $\rho_{b^1}(x)$ is the limit for $k\to \infty$ of the sequence $\rho_{b^1}(x_k)$,
  and $\rho_{b^2}(x_k) \rightarrow \rho_{b^2}(x)$ for $k\to \infty$ as well. Since the two
  sequences are equal they have the same limit, and therefore
  $\rho^1(x) = \rho^2(x)$. Since this is true for any infinite history $x$, it follows that $\rho^1 = \rho^2$.
\end{proof}

We can then use $d$ to define a metric on behavioral strategy profiles as
follows. Let $b^1, b^2 \in B$ be two behavioral strategy profiles of the same
game $\Gamma$. We define the metric $d$ on $B$ by:
  \[d(b^1, b^2) = \max \Big \{ 
  d\left(\rho_{b^1}, \rho_{b^2}\right), \;
  \underset{\substack{i \in N\\b_i \in B_i}}{\sup} d\left(\rho_{(b_i, b^1_{-i})}, \rho_{(b_i, b^2_{-i})}\right)
      \Big \}\]
Finally, we define the continuity of the expected payoff function using the sup
norm over~$\mathbb{R}^n$. Specifically, the expected payoff function $\Pi$ is \emph{continuous} if, for every sequence
  of strategy profiles $(b^k)_{k\geq 1}$ and every strategy profile $b$, we have 
  \[d(b^k, b)  \underset{k\to \infty}{\rightarrow} 0 \implies \underset{i \in N}{\sup}\ \left| \Pi_i(b^k) -
    \Pi_i(b) \right| \underset{k\to \infty}{\rightarrow} 0\]
We then say that an extensive game $\Gamma$ is continuous if its expected payoff
function is continuous. 
% Every finite game is continuous, and the games of our interest are continuous
% games with infinite horizon.

%----------------------------------------------------------------
\subsection{Perturbed Games and Perfect Equilibria}
%----------------------------------------------------------------

In this section, we extend the classical notion of trembling-hand perfect
$\epsilon$-equilibria to infinite games, generalizing~\cite{Sel75}.
We start by revisiting the notion of a perturbed game in the context of infinite games. 
A \emph{perturbed game} $\hat{\Gamma}$ is a pair $(\Gamma, \eta)$, where
$\Gamma$ is an extensive game and $\eta$ is a function which assigns a positive
probability $\eta_u(a)$ to \emph{every} action $a$ available after the information set
$u$ such that
  \begin{equation}
    \label{min_prob}
    \forall u \in U, \quad \sum_{a \in A(u)} \eta_u(a) < 1.
  \end{equation}
The perturbed game $\hat{\Gamma}$ is to be interpreted as a restriction of the
game $\Gamma$ in the following sense. The probability $\eta_u(a)$ is the
\emph{minimal probability} that player $P(u)$ can assign to the action $a$ in
his local strategy at information set $u$. Therefore we define the set of
strategies of the perturbed game as follows.
  In the perturbed game $\hat{\Gamma}$, a \emph{local strategy} $\hat{b}_{i,u}$
  of a player $i$ at one of its information sets $u$ is a probability
  distribution over the set $A(u)$ of actions available at this information
  set satisfying:
\[
    \forall (i,u,a)\in N\times U_i \times A(u), \quad \hat{b}_{i,u}(a) \geq \eta_u(a).
\]
The set of local strategies of player $i$ at the information set $u$ in the
perturbed game $\hat{\Gamma}$ is denoted by $\hat{B}_{i,u}$. Note that
$\hat{B}_{i,u}$ is not reduced to a single strategy thanks to
Eq.~\eqref{min_prob}.

From the above, we can define the set $\hat{B}_i$ of behavioral strategies of
player $i$, the set $\hat{B}$ of strategy profiles, and the notion of
$\epsilon$-equilibria of a perturbed game in the same way as for regular
extensive games.

\begin{definition}
  \label{perfect_eq}
  A strategy profile $b^*$ is a \emph{trembling hand perfect} $\epsilon$-equilibrium of
  $\Gamma$ if 
  \begin{enumerate}
  \item there exists a sequence $(\hat{\Gamma}^k) = (\Gamma, \eta^k)$, $k\geq 1$,  of
  perturbations of $\Gamma$ such that $\eta^k \rightarrow 0$ when $k\to \infty$ (where $\eta^k \rightarrow 0$ means that, for every player $i$, every information set
  $u \in U_i$, and every action $a \in A(u)$, $\eta^k_u(a) \rightarrow 0$), and
  \item there exists  a sequence of
  $\epsilon$-equilibria $\hat{b}^k$ of these perturbed games such that
  $\hat{b}^k \rightarrow b^*$ when $k\to \infty$ (where $\hat{b}^k \rightarrow b^*$  in the sense of the metric on behavioral strategy profiles).
\end{enumerate}
   Such a sequence $(\hat{\Gamma}^k)_{k\geq 1}$ of
  perturbed games is called a \emph{test sequence} for $\Gamma$.
\end{definition}

The next lemma shows that the notion of trembling hand perfect $\epsilon$-equilibrium is a refinement of the notion of subgame
  perfect $\epsilon$-equilibrium.

\begin{lemma}
  Every trembling hand perfect $\epsilon$-equilibrium of a continuous game is a subgame
  perfect $\epsilon$-equilibrium.
\end{lemma}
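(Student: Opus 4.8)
The plan is to run the classical Selten-style argument adapted to our infinite, continuous setting: establish the claim first inside every member of a test sequence of perturbed games, and then transport it to the limit using continuity. Fix an arbitrary subgame $\Gamma'$ of $\Gamma$, associated to a regular subtree $X'$ rooted at some history $w$; by regularity the information set $u(w)$ lies entirely in $X'$, so the restriction of any strategy profile to $\Gamma'$ is well defined and a player may alter her behavior inside $X'$ independently of her behavior outside $X'$. Since $b^*$ is a trembling hand perfect $\epsilon$-equilibrium, fix a test sequence $(\hat{\Gamma}^k) = (\Gamma, \eta^k)$ with $\eta^k \to 0$ and $\epsilon$-equilibria $\hat{b}^k \to b^*$. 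The goal is to show that the restriction $b^*|_{\Gamma'}$ is an $\epsilon$-equilibrium of $\Gamma'$; doing so for every subgame $\Gamma'$ yields the lemma.

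The structural engine of the proof is that \emph{perturbation forces every history to be reached with positive probability}. Indeed, each minimal probability satisfies $\eta^k_u(a) > 0$, so in $\hat{\Gamma}^k$ every local strategy puts positive weight on every available action; consequently the realization probability $\rho_{\hat{b}^k}(w)$ of the subgame root is strictly positive, no matter whether $w$ lies on the equilibrium path of $b^*$ or not. Moreover, because a deviation confined to $X'$ leaves untouched the play along the path from $\varnothing$ to $w$, it does not change $\rho_{\hat{b}^k}(w)$. This is precisely the feature that a plain Nash/$\epsilon$-equilibrium of $\Gamma$ lacks and that makes off-path subgames tractable.

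I would then reduce a subgame deviation to a whole-game deviation. Given player $i$ and a local deviation $b_i'$ confined to $\Gamma'$, lift it to a whole-game profile $\tilde{b}$ that coincides with $\hat{b}^k$ outside $X'$ and uses $b_i'$ inside $X'$. The expected payoff must be decomposed over the measure $\mu_{\hat b^k}$ on the set $Z$ of terminal histories (which may be uncountable), splitting the terminal histories according to whether they pass through $w$; this yields the factorization
\[
\Pi_i(\tilde{b}) - \Pi_i(\hat{b}^k) \;=\; \rho_{\hat{b}^k}(w)\,\big[\,\Pi'_i(b_i', \hat{b}^k_{-i}|_{\Gamma'}) - \Pi'_i(\hat{b}^k|_{\Gamma'})\,\big].
\]
Because $\hat{b}^k$ is an $\epsilon$-equilibrium of $\hat{\Gamma}^k$ and $\rho_{\hat{b}^k}(w) > 0$, any subgame deviation that were profitable beyond the tolerance would lift to a profitable whole-game deviation, contradicting the equilibrium property; hence $\hat{b}^k|_{\Gamma'}$ is an $\epsilon$-equilibrium of the perturbed subgame. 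I expect this bookkeeping step to be the \textbf{main obstacle}: one must justify the decomposition of $\int_\Sigma \pi\, d\mu_{\hat b^k}$ across the subgame root in the infinite-horizon setting, verify that conditioning on $Z_w$ recovers exactly the subgame's expected payoff $\Pi'$, and handle the precise relationship between the whole-game and conditional subgame deviation gains.

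Finally, I would pass to the limit. As $k \to \infty$ we have $\eta^k \to 0$ and $\hat{b}^k \to b^*$ in the metric $d$ on strategy profiles, and the restrictions $\hat{b}^k|_{\Gamma'}$ converge to $b^*|_{\Gamma'}$ in the corresponding metric on $\Gamma'$. Invoking continuity of the subgame's expected payoff function (which follows from $\Gamma$ being a continuous game), the $\epsilon$-equilibrium inequalities are preserved under the limit, so $b^*|_{\Gamma'}$ is an $\epsilon$-equilibrium of $\Gamma'$. Since $\Gamma'$ was arbitrary, $b^*$ is a subgame perfect $\epsilon$-equilibrium. The delicate points in this last step are that convergence in $d$ of the full profiles entails convergence of the restricted profiles in the subgame metric, and that continuity applies uniformly enough to carry the inequality through; both should follow from the definitions of the metrics in Section~\ref{subsection:metrics} and the continuity hypothesis.
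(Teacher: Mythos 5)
Your overall route is the same as the paper's: use the fact that perturbation forces every subgame to be reached with positive probability to argue that the restrictions $\hat{b}^k|_{\Gamma'}$ are approximate equilibria of the perturbed subgames, then pass to the limit via continuity. The genuine gap is in the limit step. At stage $k$, the profile $\hat{b}^k|_{\Gamma'}$ is an $\epsilon$-equilibrium only of the \emph{perturbed} subgame $(\Gamma', \tilde{\eta}^k)$: its equilibrium inequality quantifies only over deviations that themselves respect the minimum probabilities $\eta^k$, i.e., deviations in $\hat{B}^k_i$. For a fixed deviation $b_i$ of the unperturbed subgame --- for instance any pure strategy --- no inequality holds at any finite $k$, so there is nothing to ``preserve under the limit''; continuity alone cannot produce an inequality that was never available along the sequence. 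This is exactly the difficulty the paper spends the first half of its proof on (showing that a trembling hand perfect $\epsilon$-equilibrium of a continuous game is an $\epsilon$-equilibrium): it introduces the nested sets $B_i^m = \cap_{k \geq m} \hat{B}_i^k$, takes the limit in $k$ only for deviations in $B_i^m$, observes that the closure of $\cup_m B_i^m$ is all of $B_i$, and then applies continuity a \emph{second} time to extend the inequality to every deviation. The paper then gets the subgame statement by noting that the $\hat{b}^k|_{\Gamma'}$ exhibit $b^*|_{\Gamma'}$ as a trembling hand perfect $\epsilon$-equilibrium of $\Gamma'$ and reusing that result. Your proposal needs this density/two-stage-continuity argument; the delicate points you flagged (the measure-theoretic decomposition across $Z_w$, the comparison of metrics) are not where the real obstacle lies.

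A second, smaller problem: your own factorization undercuts the contradiction you draw from it when $\epsilon > 0$. If a subgame deviation gains more than $\epsilon$, the lifted whole-game deviation gains only $\rho_{\hat{b}^k}(w)$ times that amount, which can be far below $\epsilon$ when $\rho_{\hat{b}^k}(w) < 1$ (and $\rho_{\hat{b}^k}(w) \to \rho_{b^*}(w)$, which may even vanish for off-path subgames). So, as written, your argument only shows that $\hat{b}^k|_{\Gamma'}$ is an $\bigl(\epsilon / \rho_{\hat{b}^k}(w)\bigr)$-equilibrium of the perturbed subgame, not an $\epsilon$-equilibrium. In fairness, the paper's one-line justification of the corresponding claim (``every $\epsilon$-equilibrium of a perturbed game is subgame perfect, since every subgame is reached with strictly positive probability'') is equally loose and is airtight only for $\epsilon = 0$; but because you displayed the factorization explicitly, the mismatch between the identity and the conclusion is visible in your write-up and should be addressed (e.g., by restricting to the exact-equilibrium case actually needed, or by tracking the degradation of $\epsilon$ explicitly).
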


\begin{proof}
We first show that every trembling hand perfect $\epsilon$-equilibrium of a continuous game is an
  $\epsilon$-equilibrium.  Let $b^*$, $\Gamma$, $\hat{\Gamma}^k$ and $\hat{b}^k$ be as in Definition
  \ref{perfect_eq}. Since $\hat{b}^k$ are $\epsilon$-equilibria of the games
  $\hat{\Gamma}^k$, the following holds:
  \[\forall i \in N, \forall k, \forall b_i \in \hat{B}_i^k, \quad \Pi_i(b_i,
    \hat{b}_{-i}^k) - \Pi_i(\hat{b}^k) \leq \epsilon\] Let $B_i^m$ be the
  intersection of all $\hat{B}_i^k$ with $k \geq m$. Then we have:
  \[\forall i \in N, \forall k \geq m, \forall b_i \in B_i^m, \quad \Pi_i(b_i,
    \hat{b}_{-i}^k) - \Pi_i(\hat{b}^k) \leq \epsilon\] Since $\Pi$ is assumed to be 
  continuous, these inequalities remain valid if on both sides, we take the
  limits for $k \rightarrow \infty$:
  \[\forall i \in N, \forall b_i \in B_i^m, \quad \Pi_i(b_i,
    b_{-i}^*) - \Pi_i(b^*) \leq \epsilon\] This holds for every $m$. The closure
  of the union of all $B_i^m$ is $B_i$. Since $\Pi$ is continuous, this yields:
  \[\forall i \in N, \forall b_i \in B_i, \quad \Pi_i(b_i,
    b_{-i}^*) - \Pi_i(b^*) \leq \epsilon\] as desired. 

Moreover, every $\epsilon$-equilibrium of a perturbed game is subgame perfect. Indeed, 
  this follows from the fact that in a perturbed game, every subgame is reached
  with a strictly positive probability.

  We now have all the ingredients to establish the lemma. 
  Let $b^*$ be a trembling hand perfect $\epsilon$-equilibrium of the game $\Gamma$ and
  $\hat{b}_k$ a sequence of equilibria of the test sequence $\hat{\Gamma}$ such
  that $\hat{b}^k \rightarrow b^*$. Every subgame $\Gamma'$ of $\Gamma$ induces
  a sequence of subgames $\hat{\Gamma}'^k$ of the test sequence. 
  Since every $\epsilon$-equilibrium of a perturbed game is subgame perfect, it follows that 
   the sequence $\hat{b}^k$ induces a sequence of
  equilibria $\hat{b}'^k$ of these pertubed subgames, which converges to ${b'}^*$,
  the strategy profile induced by $b^*$ on $\Gamma'$. Therefore ${b'}^*$ is a
  trembling hand perfect $\epsilon$-equilibrium of $\Gamma'$, which proves that $b^*$ is
  subgame perfect since every trembling hand perfect $\epsilon$-equilibrium of a continuous game is an
  $\epsilon$-equilibrium.
\end{proof}

%----------------------------------------------------------------
\subsection{Truncated Games and Induced Equilibria}
%----------------------------------------------------------------

The goal of this section is to introduce the concept of \emph{truncation} of an
infinite game, and to draw links between equilibria of the truncated (finite)
game and the equilibria of the corresponding infinite game. The purpose of truncation is to
deal with finite game that are easier to handle than infinite ones. The material in this section is a
 generalization of~\cite{FL83} to extensive games (the
  original paper~\cite{FL83}  only applies to repeated games). 

  For every well-rounded extensive game $\Gamma = (N, A, X, P, U, p, \pi)$ with
  perfect recall, and for every positive integer $t$, we define the
  \emph{truncated game}
  $\Gamma_t = (N, A, X_t, \tilde{P}, \tilde{U}, \tilde{p}, \tilde{\Pi})$ as
  follows.
  
  \begin{itemize}
  \item $X_t$ is the set of all histories $x \in X$ of round at most
    $t$. The set of terminal histories $\tilde{Z}$ of $\Gamma_t$ is defined as the set of all
    histories with a predecessor in $X_t$ and no successor in $X_t$.
    
  \item $\tilde{P}$, $\tilde{U}$ and $\tilde{p}$ are  the restrictions of
    $P$, $U$, and $p$ to $X_t$. Note that these restrictions do not break information sets thanks to
    Lemma~\ref{round_coherence}.
    
  \item Strategies and strategy profiles of $\Gamma$ are mapped to strategies and strategy profiles of  $\Gamma_t$ as follows. 
  
    \begin{itemize}
    \item Every strategy profile $b$ of $\Gamma$ is mapped to a strategy profile
      $\tilde{b}$ of $\Gamma_t$ obtained by the restriction of $b$ to the
      information sets in $X_t$. The set of strategy profiles of $\Gamma_t$ is
      denoted by $\tilde{B}$. The strategy profile $\tilde{b}$ is said to be \emph{induced} by $b$.
      
    \item Conversely, every strategy profile $\tilde{b}$ of $\Gamma_t$ is mapped to the strategy
      profile $b$ of $\Gamma$ defined by
      \begin{itemize}
      \item $b_{i,u} = \tilde{b}_{i,u}$ for every information set $u \in \tilde{U}$ and every player $i$,
      \item $b_{i,u}(a) = \frac{1}{|A(u)|}$ for every information set
        $u \notin \tilde{U}$, every player $i$ and every action $a$ (i.e., $b_{i,u}$ is a uniform
      distribution).
      \end{itemize}
      The strategy profile $b$ is said to be induced by $\tilde{b}$.
    \end{itemize}
    
    Note that the first induction mapping is not necessarily one-to-one, while
    the second is not necessarily onto.
    
  \item In order to define the
    new payoff function $\tilde{\pi}$ over the set $\tilde{Z}$, we rather  define it implicitly, via the expected payoff function
    $\tilde{\Pi}$ which is set such that, for any strategy profile $\tilde{b} \in \tilde{B}$ that 
    induces $b \in B$, we have $\tilde{\Pi}(\tilde{b}) = \Pi(b)$.
  \end{itemize}

\begin{lemma}
  \label{equilibrium_characterisation}
  In a continuous well-rounded game with perfect recall $\Gamma$, a strategy
  profile $b^*$ is an $\epsilon$-equilibrium if and only if there exists
  sequences $\epsilon^k$, $t^k$ and $b^k$ such that for every $k$, $b^k$ is an
  $\epsilon^k$-equilibrium in $\Gamma_{t^k}$ and as $k$ goes to infinity,
  $\epsilon^k \rightarrow \epsilon$, $t^k \rightarrow \infty$ and
  ${b'}^k \rightarrow b^*$, where ${b'}^k$ is induced by $b^k$ on $\Gamma$.
\end{lemma}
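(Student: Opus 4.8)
The plan is to prove the two implications separately, in both cases transferring information between $\Gamma$ and its truncations $\Gamma_{t}$ through the two induction mappings of the previous subsection, and controlling everything that happens beyond round $t$ by the damping factor $2^{-r(x)}$ built into the metric of Definition~\ref{outcomes_metric}. The single combinatorial fact I would isolate first is a \emph{truncation estimate}: if two strategy profiles $b$ and $b'$ agree on every information set of round at most $t$, then $\rho_b(x)=\rho_{b'}(x)$ for every finite history $x$ with $r(x)\le t$ (because, $\Gamma$ being well-rounded, all prefixes of such an $x$ have round at most $t$, and by Lemma~\ref{round_coherence} the round of an information set is well defined), whence $d(\rho_b,\rho_{b'})\le 2^{-(t+1)}$. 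Since a single further deviation by one player preserves the agreement on round-$\le t$ information sets, the same bound holds for the full behavioral-strategy metric, $d(b,b')\le 2^{-(t+1)}$, \emph{uniformly}. Because $\pi$ is bounded by $M$ and $\Pi(b)$ depends on $b$ only through $\rho_b$ (the measure $\mu_b$ is determined by $\mu_b(Z_x)=\rho_b(x)$), continuity of $\Gamma$ converts any such metric smallness into smallness of the expected payoffs; on the compact space of feasible outcomes this continuity is moreover uniform, which is what I will need to control suprema over deviations.

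For the ``only if'' direction, assume $b^*$ is an $\epsilon$-equilibrium of $\Gamma$, pick any sequence $t^k\to\infty$, and let $b^k$ be the restriction $\tilde b^k$ of $b^*$ to $\Gamma_{t^k}$; its induced profile ${b'}^k$ equals $b^*$ up to round $t^k$ and is uniform afterwards, so by the truncation estimate ${b'}^k\to b^*$. For a player $i$ and any deviation $\tilde b_i$ in $\Gamma_{t^k}$, extend it to $\hat b_i$ (uniform after $t^k$) and use $\tilde\Pi(\tilde b)=\Pi(b)$ to write the deviation gain as $\Pi_i(\hat b_i,{b'}^k_{-i})-\Pi_i({b'}^k)$. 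I then split this as $[\Pi_i(\hat b_i,{b'}^k_{-i})-\Pi_i(\hat b_i,b^*_{-i})]+[\Pi_i(\hat b_i,b^*_{-i})-\Pi_i(b^*)]+[\Pi_i(b^*)-\Pi_i({b'}^k)]$. The middle bracket is at most $\epsilon$ by hypothesis; the first and third brackets involve pairs of profiles that agree on all round-$\le t^k$ information sets (player $i$ is held fixed, and ${b'}^k_{-i}$ agrees with $b^*_{-i}$ up to $t^k$), so by the truncation estimate and uniform continuity they are bounded by a quantity $\eta^k\to 0$ that does \emph{not} depend on $\tilde b_i$. Taking $\epsilon^k=\epsilon+2\eta^k$ makes $b^k$ an $\epsilon^k$-equilibrium of $\Gamma_{t^k}$ with $\epsilon^k\to\epsilon$.

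For the ``if'' direction, given sequences $\epsilon^k,t^k,b^k$ with ${b'}^k\to b^*$, fix a player $i$ and a deviation $b_i'\in B_i$; I want $\Pi_i(b_i',b^*_{-i})-\Pi_i(b^*)\le\epsilon$. Restrict $b_i'$ to $\Gamma_{t^k}$, apply the $\epsilon^k$-equilibrium property of $b^k$ there, and translate back via $\tilde\Pi=\Pi$ to obtain $\Pi_i(\hat b_i^k,{b'}^k_{-i})-\Pi_i({b'}^k)\le\epsilon^k$, where $\hat b_i^k$ equals $b_i'$ up to round $t^k$ and is uniform afterwards. I then pass to the limit: $\Pi_i({b'}^k)\to\Pi_i(b^*)$ by continuity since ${b'}^k\to b^*$, and for the deviating term I write $\Pi_i(\hat b_i^k,{b'}^k_{-i})-\Pi_i(b_i',b^*_{-i})$ as the sum of $[\Pi_i(\hat b_i^k,{b'}^k_{-i})-\Pi_i(b_i',{b'}^k_{-i})]$, which vanishes by the truncation estimate (as $\hat b_i^k$ and $b_i'$ agree up to round $t^k$) together with uniform continuity, and $[\Pi_i(b_i',{b'}^k_{-i})-\Pi_i(b_i',b^*_{-i})]$, which vanishes because the second term of the strategy metric gives $d(\rho_{(b_i',{b'}^k_{-i})},\rho_{(b_i',b^*_{-i})})\le d({b'}^k,b^*)\to 0$ and $\Pi$ is continuous in the outcome. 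Letting $k\to\infty$ yields $\Pi_i(b_i',b^*_{-i})-\Pi_i(b^*)\le\lim\epsilon^k=\epsilon$, and since $i$ and $b_i'$ are arbitrary, $b^*$ is an $\epsilon$-equilibrium.

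The step I expect to be the main obstacle is the uniformity in the ``only if'' direction: to certify $b^k$ as an $\epsilon^k$-equilibrium with $\epsilon^k\to\epsilon$, the correction $\eta^k$ must bound the first and third brackets \emph{simultaneously for all} $\tilde b_i$, not merely pointwise. This is precisely where the design of the metric pays off, since the truncation estimate produces a bound that is independent of the held-fixed deviation, but it forces me to upgrade plain continuity of $\Pi$ to uniform continuity, which in turn rests on compactness of the space of feasible outcomes under $d$ (a consequence of the finiteness of the action set together with the $2^{-r(x)}$ weighting). The secondary subtlety is the last term of the ``if'' direction, where only metric convergence ${b'}^k\to b^*$ is available rather than exact agreement up to some round; there I must use that $\Pi$ factors through the outcome map and that the second term of the behavioral-strategy metric was engineered exactly to dominate the distance between the deviated outcomes $\rho_{(b_i',{b'}^k_{-i})}$ and $\rho_{(b_i',b^*_{-i})}$.
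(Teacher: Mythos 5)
Your proof is correct and takes essentially the same route as the paper's: your truncation estimate combined with uniform continuity is precisely the paper's quantity $w^t$ together with its Claim~\ref{w_t_0} that $w^t \to 0$, your three-bracket decomposition in the ``only if'' direction reproves the paper's equilibrium-transfer Claim~\ref{induced_equilibria_properties}, and your direct passage to the limit in the ``if'' direction unrolls the paper's combination of that transfer claim with the closedness of the set of $\epsilon$-equilibria (Claim~\ref{closed_equilibria}). The two subtleties you flag --- uniform continuity obtained from compactness, and continuity of payoffs under a held-fixed deviation --- are invoked at exactly the same level of (implicit) rigor in the paper's own proofs of Claims~\ref{w_t_0} and~\ref{closed_equilibria}, so your argument matches the paper's standard.
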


\begin{proof}
  Given a game $\Gamma$, for every $t\geq 1$, let 
  \[w^t = \max_{i\in N} \underset{\substack{b^1, b^2 \in B \; : \; \tilde{b}^1 =
        \tilde{b}^2}}{\sup} \left|\Pi_i(b^1) - \Pi_i(b^2)\right|\] where
  $\tilde{b}^1 = \tilde{b}^2$ stands for the fact that $b^1$ and $b^2$ induce the
  same strategy profiles over the truncated game $\Gamma_t$.
Note that $w^t \leq 2 M$ for every $t\geq 1$. 

\begin{claim}
  \label{w_t_0}
  A continuous game $\Gamma$ satisfies
  $w^t \underset{t \shortrightarrow \infty}{\longrightarrow} 0$.
\end{claim}

To establish the claim, 
  let $b^1$ and $b^2$ be two strategy profiles of $\Gamma$ which induce the same
  strategy profiles over $\Gamma_t$. This implies that for any history $x$ of
  round at most $t$ and any deviation $b_i$ of some player $i$, the relations
  $\rho_{b^1}(x) = \rho_{b^2}(x)$ and
  $\rho_{(b_i, b^1_{-i})}(x) = \rho_{(b_i, b^2_{-i})}(x)$ holds. Therefore by the
  definition of the metric $d$ on feasible outcomes (cf. Definition~\ref{outcomes_metric}) and its extension to behavioral strategy profiles, we have
  $d(b^1, b^2) \leq 2^{-(t+1)}$. Therefore:
  \[w^t \leq \underset{\substack{i \in n\\ b^1, b^2 \in B\\d(b^1,b^2) \leq 2^{-(t+1)}}}{\sup} \left| \Pi_i(b^1) - \Pi_i(b^2) \right|\]
  And since $\Gamma$ is continuous,
  \[\underset{\substack{i \in n\\ b^1, b^2 \in B\\d(b^1,b^2) \leq 2^{-(t+1)}}}{\sup} \left|
      \Pi_i(b^1) - \Pi_i(b^2) \right| \underset{t \shortrightarrow
      \infty}{\longrightarrow} 0\]
This completes the proof of Claim~\ref{w_t_0}.

\begin{claim}
  \label{induced_equilibria_properties}
  Let $\Gamma$ be a well-rounded game with perfect recall and $\Gamma_t$ its
  truncation. We have that:
  \begin{enumerate}
  \item Any $\epsilon$-equilibrium $\tilde{b}^*$ in $\Gamma_t$ induces a
    $(\epsilon + w^t)$-equilibrium in $\Gamma$.
  \item Any $\epsilon$-equilibrium $b^*$ in $\Gamma$ induces a
    $(\epsilon + 2 w^t)$-equilibrium in $\Gamma_t$.
  \end{enumerate}
\end{claim}

To establish the first item of the claim, let $\tilde{b}^*$ be an $\epsilon$-equilibrium in $\Gamma_t$ wich
    induces $b^*$ in $\Gamma$. Let $i$ be a player, $b'_i \in B_i$ a deviation
    of this player in $\Gamma$, $\tilde{b}_i$ the deviation it induces in
    $\Gamma_t$ and $b_i$ the deviation induced by $\tilde{b}_i$ in
    $\Gamma$. Since $\tilde{b}^*$ is an $\epsilon$-equilibrium in $\Gamma_t$, we
    obtain that
    \[\tilde{\Pi}_i(\tilde{b}_i, \tilde{b}^*_{-i}) - \tilde{\Pi}_i(\tilde{b}^*)
      \leq \epsilon\]
    By definition of $\tilde{\Pi}$, the following holds:
    \begin{align*}
      \tilde{\Pi}_i(\tilde{b}_i, \tilde{b}^*_{-i}) &= \Pi_i(b_i, b^*_{-i})\\
      \tilde{\Pi}_i(\tilde{b}^*) &= \Pi_i(b^*)
    \end{align*}
    Furthermore, by definition of $w^t$, we have 
    \[\Pi_i(b'_i, b^*_{-i}) - \Pi_i(b_i, b^*_{-i}) \leq w^t\]
    We eventually arrive at
    \[\Pi_i(b'_i, b^*_{-i}) - \Pi_i(b^*) \leq \epsilon + w^t\]
    Which proves that $b^*$ is a $(\epsilon + w^t)$-equilibrium, since $i$ is
    any player and $b'_i$ is any deviation in $B_i$.
  
  To establish the second item of Claim~\ref{induced_equilibria_properties}, let $b^*$ be an $\epsilon$-equilibrium in $\Gamma$, $\tilde{b}^*$ the
    strategy profile it induces on $\Gamma_t$, and ${b'}^*$ the strategy profile
    induced by $\tilde{b}^*$ on $\Gamma$. Let $i$ be a player,
    $\tilde{b}_i \in \tilde{B}_i$ a deviation in $\Gamma_t$ and $b_i$ the
    deviation it induces in $\Gamma$.
    Because $b^*$ is an $\epsilon$-equilibrium, we have:
    \[\Pi_i(b_i,b_{-i}^*) - \Pi_i(b^*) \leq \epsilon\]
    By definition of $\tilde{\Pi}$, we have:
    \begin{align*}
      \tilde{\Pi}_i(\tilde{b}^*) &= \Pi_i({b'}^*)\\
      \tilde{\Pi}_i(\tilde{b}_i,\tilde{b}_{-i}^*) &= \Pi_i(b_i, {b'_{-i}}^*) 
    \end{align*}
    And by definition of $w^t$:
    \begin{align*}
      \Pi_i({b'}^*) - \Pi_i(b^*) &\leq w^t\\
      \Pi_i(b_i,{b'}_{-i}^*) - \Pi_i(b_i, b_{-i}^*) &\leq w^t
    \end{align*}
    The above relations put together eventually gives us:
    \[\tilde{\Pi}_i(\tilde{b}_i, \tilde{b}_{-i}^*) - \tilde{\Pi}_i(\tilde{b}^*) \leq \epsilon + 2 w^t\]
    Wich proves that $\tilde{b}^*$ is a $(\epsilon + 2 w^t)$-equilibrium in
    $\Gamma_t$.
This completes the proof of Claim~\ref{induced_equilibria_properties}.

Note that Claim~\ref{induced_equilibria_properties} applies to perturbed games as well, as long as induced
strategy profiles verify the constraints of the minimal probabilities.

\begin{claim}
  \label{closed_equilibria}
  Let $b^k$ be a sequence of $\epsilon$-equilibria of a continuous well-rounded
  game with perfect recall $\Gamma$ such that $b^k \rightarrow b^*$. Then $b^*$
  is also an $\epsilon$-equilibrium of $\Gamma$. In other words, for every
  $\epsilon$, the set of $\epsilon$-equilibria of such a game is closed.
\end{claim}

 To establish the claim, suppose that $b^*$ is not an $\epsilon$-equilibrium of $\Gamma$. Therefore for
  some player $i$ and some $\delta > 0$, there exists a deviation $b_i \in B_i$
  such that:
  \[\Pi_i(b_i,b_{-i}^*) - \Pi_i(b^*) \geq \epsilon + 3 \delta\]
  By continuity of $\Pi$, we have for a large enough $k$:
  \begin{align*}
    \Pi_i(b^k) - \Pi_i(b^*) &\leq \delta\\
    \Pi_i(b_i, b_{-i}^*) - \Pi_i(b_i, b_{-i}^k) &\leq \delta
  \end{align*}
  Substracting the two above inequalities to the previous one yields:
  \[\Pi_i(b_i, b_{-i}^k) - \Pi_i(b^k) \geq \epsilon + \delta\]
  which contradicts the premise that $b^k$ is an $\epsilon$-equilibrium.
This completes the proof of Claim~\ref{closed_equilibria}.

We have now all ingredients to prove the lemma. 
Assume first that $b^*$ is an $\epsilon$-equilibrium in $\Gamma$ and define
    $b^k$ to be the strategy profile induced by $b^*$ on $\Gamma_k$ and ${b'}^k$
    induced by $b^k$ on $\Gamma$. By Claim \ref{induced_equilibria_properties},
    for every $k$, $b^k$ is a $(\epsilon + 2 w^k)$-equilibrium in $\Gamma_k$,
    and by Claim \ref{w_t_0}, $w^k \rightarrow 0$. Furthermore,
    $d({b'}^k, b^*) \leq 2^{-k}$ by definition of $d$, therefore
    ${b'}^k \rightarrow b^*$.
 
 Conversely, assume now that there exists such sequences $\epsilon^k$, $t^k$ and
    $b^k$. By Claim \ref{induced_equilibria_properties}, for every $k$, the
    strategy profile ${b'}^k$ induced by $b^k$ on $\Gamma$ is an
    $(\epsilon^k + w^{t^k})$-equilibrium in $\Gamma$. Since $\Gamma$ is
    continuous, we know by Claim \ref{w_t_0} that
    $(\epsilon^k + w^{t^k}) \rightarrow \epsilon$. For every $\delta > 0$, we
    have for every $k$ large enough
    $\epsilon^k + w^{t^k} \leq \epsilon + \delta$. By Claim
    \ref{closed_equilibria}, this implies that $b^*$ is a
    $(\epsilon + \delta)$-equilibrium. Since this is true for every
    $\delta > 0$, we have shown that $b^*$ is an $\epsilon$-equilibrium in
    $\Gamma$.
\end{proof}

%----------------------------------------------------------------
\subsection{Truncation of Perturbed Games}
%----------------------------------------------------------------

The goal of this section is to extend the results of the previous section to
trembling hand perfect $\epsilon$-equilibria (instead of just for
$\epsilon$-equilibria). We define what is the truncation of a perturbed game in
a straightforward way: for every perturbed game $\hat{\Gamma} = (\Gamma, \eta)$
and every positive integer $t$, the \emph{truncated perturbed game}
$\hat{\Gamma}_t$ is $(\Gamma_t, \tilde{\eta})$, where $\tilde{\eta}$ is simply the
restriction of $\eta$ to $X_t$.

\begin{lemma}
  \label{perfect_characterisation}
  In a continuous game $\Gamma$, a strategy profile $b^*$ is a trembling hand perfect
  $\epsilon$-equilibrium if and only if there exists sequences $\epsilon^k$,
  $t^k$ and $b^k$ such that for every $k$, $b^k$ is a trembling hand perfect
  $\epsilon^k$-equilibrium in $\Gamma_{t^k}$, and, as $k$ goes to infinity,
  $\epsilon^k \rightarrow \epsilon$, $t^k \rightarrow \infty$ and
  ${b'}^k \rightarrow b^*$, where ${b'}^k$ is induced by $b^k$ on $\Gamma$.
\end{lemma}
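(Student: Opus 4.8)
The plan is to mirror the two-sided argument used for Lemma~\ref{equilibrium_characterisation}, but carried out one level up, at the level of the \emph{test sequences} that witness trembling-hand perfection. The two workhorses are the perturbed-game version of Claim~\ref{induced_equilibria_properties} (valid by the remark following that claim, provided the induced profiles respect the minimal-probability constraints) together with $w^t \to 0$ from Claim~\ref{w_t_0}. The truncation of a perturbed game $(\Gamma,\eta)$ is, by definition, the perturbed truncated game $(\Gamma_t,\tilde\eta)$, so truncation and perturbation commute syntactically; the content of the lemma is that they also commute in the limit.

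For the forward direction, suppose $b^*$ is a trembling-hand perfect $\epsilon$-equilibrium of $\Gamma$, witnessed by a test sequence $(\Gamma,\eta^j)$ with $\eta^j \to 0$ and $\epsilon$-equilibria $\hat b^j \to b^*$. I fix a truncation depth $k$ and restrict everything to round at most $k$: the restrictions $\tilde\eta^{j}$ perturb $\Gamma_k$ and tend to $0$, while the profile $\tilde b^{j}$ induced by $\hat b^{j}$ on $\Gamma_k$ is, by the perturbed version of Claim~\ref{induced_equilibria_properties}(2), an $(\epsilon+2w^k)$-equilibrium of $(\Gamma_k,\tilde\eta^{j})$. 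Since the restriction map inspects only the finitely many rounds $\le k$, it is continuous for the metric of Definition~\ref{outcomes_metric}, so $\tilde b^{j}$ converges as $j\to\infty$ to the profile $b^k$ induced by $b^*$ on $\Gamma_k$. Hence $b^k$ is a trembling-hand perfect $(\epsilon+2w^k)$-equilibrium of $\Gamma_k$. Setting $t^k=k$ and $\epsilon^k=\epsilon+2w^k$, Claim~\ref{w_t_0} gives $\epsilon^k\to\epsilon$, and $d({b'}^k,b^*)\le 2^{-k}$ (as ${b'}^k$ agrees with $b^*$ up to round $k$) gives ${b'}^k\to b^*$, which is exactly the required sequence.

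For the converse, assume the sequences $\epsilon^k,t^k,b^k$ are given, each $b^k$ being a trembling-hand perfect $\epsilon^k$-equilibrium of $\Gamma_{t^k}$. I unfold each witness into a finite-game test sequence $(\Gamma_{t^k},\eta^{k,m})$ with $\eta^{k,m}\to_m 0$ and $\epsilon^k$-equilibria $c^{k,m}\to_m b^k$. I then extend each $\eta^{k,m}$ to a perturbation $\bar\eta^{k,m}$ of $\Gamma$ by assigning, beyond round $t^k$, minimal probabilities small enough that the uniform completion of $c^{k,m}$ stays admissible; the profile $\bar c^{k,m}$ induced on $\Gamma$ is then, by the perturbed version of Claim~\ref{induced_equilibria_properties}(1), an $(\epsilon^k+w^{t^k})$-equilibrium of $(\Gamma,\bar\eta^{k,m})$. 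A diagonal choice $m=m(k)$, large enough that $\bar\eta^{k,m(k)}$ is within $1/k$ of $0$ and $c^{k,m(k)}$ within $1/k$ of $b^k$, yields a single test sequence $(\Gamma,\bar\eta^{k,m(k)})$ whose perturbations vanish and whose $(\epsilon^k+w^{t^k})$-equilibria $\bar c^{k,m(k)}$ converge to $b^*$ (using $\bar c^{k,m(k)}\to{b'}^k$ and ${b'}^k\to b^*$). Finally, since $\epsilon^k+w^{t^k}\to\epsilon$ by Claim~\ref{w_t_0}, absorbing this residual error into the target accuracy — exactly as the converse of Lemma~\ref{equilibrium_characterisation} absorbs $\epsilon^k+w^{t^k}\to\epsilon$ into an exactly-$\epsilon$ conclusion, there via the closedness Claim~\ref{closed_equilibria} — upgrades $b^*$ to a trembling-hand perfect $\epsilon$-equilibrium of $\Gamma$.

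The step I expect to be the main obstacle is the converse diagonalization: two limiting processes are in play — the truncation depth $t^k\to\infty$ and, inside each finite game, the perturbation strength $\eta^{k,m}\to 0$ — and they must be collapsed into one vanishing test sequence while the cumulative error stays under control. This is precisely where the slack $\epsilon^k\to\epsilon$ (rather than a fixed $\epsilon$) is unavoidable, since each truncation contributes an additive $w^{t^k}$; verifying that this residual washes out in the limit, and that every induced profile still satisfies the minimal-probability constraints of its perturbed game, is the delicate bookkeeping the proof must discharge.
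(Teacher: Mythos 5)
Your proposal is correct and takes essentially the same approach as the paper: your forward direction (restrict the witnessing test sequence to $\Gamma_k$ and apply the perturbed version of Claim~\ref{induced_equilibria_properties}) is exactly the paper's Claim~\ref{induced_perfect_equilibria}, and your converse diagonalization merely inlines what the paper factors into Claim~\ref{induced_perfect_equilibria}(1) followed by the closedness Claim~\ref{closed_perfectness}, whose proof is precisely your diagonal choice $m(k)$. The residual-error absorption you flag as the delicate step is discharged in the paper in the very way you propose --- via the $\delta$-slack argument inherited from the converse of Lemma~\ref{equilibrium_characterisation} --- so your write-up is no less complete than the paper's own.
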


\begin{proof}
We first show the following. 

\begin{claim}
  \label{induced_perfect_equilibria}
  Let $\Gamma$ be a continuous game and $\Gamma_t$ its truncation.
  \begin{enumerate}
  \item Any trembling hand perfect $\epsilon$-equilibrium $\tilde{b}^*$ in $\Gamma_t$ induces a
    trembling hand perfect $(\epsilon + w^t)$-equilibrium in $\Gamma$.
  \item Any trembling hand perfect $\epsilon$-equilibrium $b^*$ in $\Gamma$ induces a trembling hand perfect
    $(\epsilon + 2 w^t)$-equilibrium in $\Gamma_t$.
  \end{enumerate}
\end{claim}

For proving the first item of Claim~\ref{induced_perfect_equilibria}, let $\tilde{b}^*$ be a trembling hand perfect $\epsilon$-equilibrium in $\Gamma_t$. By
    definition of a trembling hand perfect equilibrium, there exists $\tilde{b}^k$ a sequence
    of $\epsilon$-equilibrium of a test sequence
    $\hat{\Gamma}_t^k = (\Gamma_t, \tilde{\eta}^k)$ of the game $\Gamma_t$, such
    that $\tilde{b}^k \rightarrow \tilde{b}^*$. This test sequence can be
    extended into a sequence of infinite games
    $\hat{\Gamma}^k = (\Gamma, \eta^k)$ by choosing the same minimal
    probabilities for any information set of round at most $t$, and for every
    information set $u$ of round greater than $t$ and any action $a \in A(u)$,
    define $\eta_u^k(a) = \frac{2^{-k}}{|A(u)|}$. Therefore the sequence
    $\hat{\Gamma}^k$ is by definition a test sequence of $\Gamma$. Furthermore,
    $\eta^k$ was defined such that Claim \ref{induced_equilibria_properties}
    applies: the strategy profiles $b^k$ induced by $\tilde{b}^k$ are therefore
    $(\epsilon + w^t)$-equilibria of $\hat{\Gamma}_t^k$. We also have that $b^k$
    is induced by $\tilde{b}^k$, $b^*$ is induced by $\tilde{b}^*$ and
    $\tilde{b}^k \rightarrow \tilde{b}^*$, therefore $b^k \rightarrow b^*$.
    This proves that $b^*$ is a trembling hand perfect $(\epsilon + w^t)$-equilibrium of
    $\Gamma$.
  
  For proving the second item of Claim~\ref{induced_perfect_equilibria}, let $b^*$ be a trembling hand perfect $\epsilon$-equilibrium in $\Gamma$. By
    definition, there exists a sequence $b^k$ of $\epsilon$-equilibria of a test
    sequence $\hat{\Gamma}^k$ of $\Gamma$ such that $b^k \rightarrow b^*$. For
    $k$ large enough, the following is true:
    \[\forall u \in U, \forall a \in A(u), \quad \eta_u^k(a) \leq \frac{1}{|A(u)|}\]
    Therefore we can apply Claim \ref{induced_equilibria_properties}: the
    strategy profiles $\tilde{b}^k$ induced by $b^k$ in $\hat{\Gamma}_t^k$ are
    $(\epsilon + 2 w^t)$-equilibria. Obviously $\hat{\Gamma}_t^k$ is a test
    sequence for $\Gamma_t$. And since $\tilde{b}^k$ is induced by $b^k$,
    $\tilde{b}^*$ is induced by $b^*$ and $b^k \rightarrow b^*$, we know that
    $\tilde{b}^k$ converges to $\tilde{b}^*$. Which proves that $\tilde{b}^*$ is
    a $(\epsilon + 2 w^t)$-equilibrium in $\Gamma_t$.
This completes the proof of Claim~\ref{induced_perfect_equilibria}.

\begin{claim}
  \label{closed_perfectness}
  Let $b^k$ be a sequence of trembling hand perfect $\epsilon$-equilibria of a game $\Gamma$
  such that $b^k \rightarrow b^*$. Then $b^*$ is also a trembling hand perfect
  $\epsilon$-equilibrium of $\Gamma$.
\end{claim}

Indeed, for every $k$, since $b^k$ is a trembling hand perfect $\epsilon$-equilibrium, there exists a
  sequence $b^{k,n}$ of $\epsilon$-equilibria of a test sequence
  $\hat{\Gamma}^n$ of $\Gamma$, such that
  $b^{k,n} \underset{n \shortrightarrow \infty}{\longrightarrow} b^k$. The
  sequence $b^{k,k}$ of $\epsilon$-equilibria of the test sequence
  $\hat{\Gamma}^k$ converges to $b^*$, thus proving that it is a trembling hand perfect
  $\epsilon$-equilibrium of $\Gamma$.

  The proof of the lemma is then identical to the proof of Lemma \ref{equilibrium_characterisation}, by replacing Claims \ref{induced_equilibria_properties} and \ref{closed_equilibria}
  with  Claims \ref{induced_perfect_equilibria} and
  \ref{closed_perfectness}, respectively.
\end{proof}

%----------------------------------------------------------------
\subsection{Proof of Lemma~\ref{theo:maingeneral}}
%----------------------------------------------------------------

We have now all the ingredients to prove that, as stated in Lemma~\ref{theo:maingeneral} every infinite, continuous, measurable, well-rounded, extensive game with perfect
  recall and finite action set has a trembling hand perfect equilibrium.
%  
%\begin{theorem}
%  \label{theorem:existence}
%  Every infinite, continuous, measurable, well-rounded, extensive game with perfect
%  recall and finite action set has a trembling hand perfect equilibrium.
%\end{theorem}
%
%\begin{proof}
  First observe that the set of strategy profiles $B$ of an extensive game
  $\Gamma$ with a finite action set is sequentially compact. Indeed, let
  $\Gamma$ be a game with a finite action set $A$. For every player $i$ and
  every information set $u$, the set of local strategies $B_{i u}$ is a simplex
  in the space $\mathbb{R}^{A(u)}$, and therefore it is sequentially
  compact. Because a countable product of sequentially compact spaces is
  sequentially compact, % \cite{SeqCom}
  $$B = \underset{i \in N}{\times} \underset{u \in U_i}{\times} B_{i u}$$ is
  sequentially compact. Indeed, there are countably many finite histories in any
  game with a finite action set, therefore there are also countably many
  information sets.

  Consider now a (possibly infinite) continuous well-rounded extensive game
  $\Gamma$ with perfect recall and a finite action set, and consider the
  corresponding sequence of truncated games $\Gamma_k$. Selten~\cite{Sel75} has
  shown in 1975 that every finite extensive game with perfect recall has a
  trembling hand perfect equilibrium.  By this result, every game $\Gamma_k$ has
  a trembling hand perfect equilibrium that we call $b^k$. We call ${b'}^k$ the
  sequence of strategy profiles induced by $b^k$ in the game $\Gamma$. Because
  $B$ is sequentially compact, we can extract a convergent subsequence
  ${b'}^{t_k}$ and we call $b^*$ its limit. Lemma \ref{perfect_characterisation}
  shows that $b^*$ is a trembling hand perfect equilibrium of $\Gamma$.
  
  %% Added by Pierre %%%
  We say that $\Gamma$ is a \emph{symmetric game} if it satisfies the
  following requirements. 
  (1) 
   For every player $i$, there exist a bijection $type_i: U_i \rightarrow
    [1..I]$ such that for every two players $i$ and $i'$ and two information
    sets $u \in U_i$ and $u' \in U_{i'}$ such that $type_i(u) = type_{i'}(u')$,
    we also have $A(u) = A(u')$ and $\actions_i(u) = \actions_{i'}(u')$.
(2)
  For every two players $i$ and $i'$, we define the relation $R_{i, i'}$
    on strategy profiles: For every two strategy profiles $b$ and $b'$, we have
    $b\ R_{i, i'}\ b'$ if and only if, for every positive integer~$k$, 
    \[b_{i,
      type_i^{-1}(k)} = b'_{i', type_{i'}^{-1}(k)} \;\mbox{and}\; b'_{i, type_i^{-1}(k)}
    = b_{i', type_{i'}^{-1}(k)},
    \] and, for every player $j$ different from $i$ and
    $i'$, $b_j = b'_j$.
  To be symmetric, $\Gamma$ must satisfy that, for every
    two players $i$ and $i'$ and every two strategy profiles $b$ and $b'$ such
    that $b\ R_{i, i'}\ b'$, we have  $\Pi_i(b) = \Pi_{i'}(b')$ and
    $\Pi_i(b') = \Pi_{i'}(b)$.
  In a symmetric game $\Gamma$, we say that a profile of strategies $b \in B$ is
  \emph{symmetric} if and only if for every two players $i$ and $i'$, $b\ R_{i,
    i'}\ b$.
Since a limit of a sequence of symmetric strategy profiles is
  symmetric, we can derive that $b^*$ is a symmetric trembling-hand perfect
  equilibrium. \qed
%\end{proof}

%%%%%%%%%%%%%%%%%%%%%%%%%%%%%%%%%%%%%%%%%%%%%
\section{Existence of efficient robust algorithms for LCL games}
%%%%%%%%%%%%%%%%%%%%%%%%%%%%%%%%%%%%%%%%%%%%%

The hypotheses regarding the topological nature of the strategy, and on the nature of the payoff function (continuity, measurability, etc.) are standard in the framework of extensive games. The notion of well-rounded game is new, and used to capture the fact that the nodes play in rounds in an LCL game. The fact that the equilibrium is symmetric is crucial since, in LCL games, as in randomized distributed computing in general, the instructions given to all nodes are identical, and the behavior of the nodes only vary along with the execution of the algorithm when they progressively discover their environment. We show that LCL games satisfy all hypotheses of Lemma~\ref{theo:maingeneral}, from which we get our main result as stated in the introduction: 

\medskip

\noindent\textbf{Theorem 1} 
~\textit{Let $\cL$ be a greedily constructible locally checkable labeling. The LCL game associated to~$\cL$ has a symmetric trembling-hand perfect equilibrium. }

\medskip

The rest of the section is dedicated to the proof of the theorem. We start by formally defining LCL games. 

\subsection{Formal Definition of a LCL Game}
 \label{subsec:formaldefLCLgame}
 
Let $A$ be a finite alphabet, $\mathcal{F}$ a family of graphs with at most
$n$ vertices, $\mathbf{D}$ a probability distribution over~$\mathcal{F}$, and
$\mathcal{L}$ a greedily constructible LCL language over $\mathcal{F}$. Let $t$
be the radius of $\mathcal{L}$ and $\good(\mathcal{L})$ be the set of good balls in
$\mathcal{L}$. Let $\pref: \good(\mathcal{L}) \mapsto [0,M]$, for $M>0$, be a function
representing the preferences of the players over good balls and $\delta \in
(0,1)$ a \emph{discounting factor}. We define the game
\[
\Gamma(\mathcal{L},\mathbf{D},\pref,\delta) = (N, A, X, P, U, p, \pi)
\]
associated to the language $\mathcal{L}$, the distribution $\mathbf{D}$,
the preference function $\pref$, and the discounting factor $\delta$, as
follows.

\begin{itemize}
\item The player set is $N = \{1, \dots, n\}$.

\item The action set is $A  \cup \mathcal{F}$ where the actions in
  $\mathcal{F}$ are only used by the extra player Chance in the initial move, and the actions in
  $A$ are used by the actual players.
  
\item The first move of the game is made by Chance (i.e $P(\varnothing) = c$). As a result, a 
  graph $G \in \mathcal{F}$ with a mapping of the players to the nodes of $G$ is selected at random according to the probability
  distribution $\mathbf{D}$. From now on, the players are identified with
  the vertices of the graph $G$, labeled from~1 to~$n$. Note that $\cF$ might be reduced to a single graph, e.g., $\cF=\{C_n\}$, and Chance just selects which vertices of the graph will be played by which players. 
  
\item The game is then divided into rounds (corresponding to the intuitive
  meaning in synchronous distributed algorithms). At each round, the
  \emph{active} players play in increasing order, from~1 to~$n$. At round~0 every player is
  active and plays, and  every action in $A$ is available.
  
\item At the end of each round (i.e., after every active player has played the
  same number of times), some players might become \emph{inactive}, depending on
  the actions chosen during the previous rounds. For every $i \in N$, let $s(i)$
  denote the last action played by player~$i$, which we call the \emph{state} of
  $i$, and let $\ball(i)$ denote the ball of radius $t$ centered at node $i$. Every player $i$ such that 
  $ball(i) \in good(\mathcal{L})$ at the end of
  a round becomes \emph{inactive}.
  
\item In subsequent rounds, the set of available actions might be restricted. 
  For every round $r > 0$, and for every active player $i$, an action $a\in A$ is
  available to player $i$ if and only if there exists a ball $b \in
  \good(\mathcal{L})$ compatible with the states of inactive players in which
  $s(i) = a$.
  
\item A history is terminal if and only if either it is infinite or it comes after the
  end of a round and every player is inactive after that round.
  
\item Let $x$ be a history. 
We denote by
  $\actions_i(x)$ the sequence of actions extracted from $x$ by selecting all
  actions taken by player $i$ during rounds before $r(x)$. (The action
  possibly made by player $i$ at round $r(x)$, and  actions made by a player
  $j \neq i$ are not included in $\actions_i(x)$). 
  
\item Let $x$ and $y$ be two non terminal histories such that $P(x) = P(y) =
  i$. Then $x$ and $y$ are in the same information set if and only if for every $ j \in \ball(i)$,
  we have  
  \[
    \quad \actions_j(x) = \actions_j(y).
  \]
  This can be interpreted by saying that a player $i$ ``knows''  every
  action previously taken by any player at distance at most $t$.  
  
\item Let $i$ be a player, and let $z$ be a terminal history. We define the
  \emph{terminating time} $\time_i(z)$ of player $i$ in history $z$ by
  \[
  \time_i(z) = \max \{|\actions_j(z)|, j \in \ball(i)\} - 1.
  \] 
  The payoff function
  $\pi$ of the game is then defined as follows. For every player $i$, and every terminal
  history $z$,
  \[
    \pi_i(z) = \delta^{\time_i(z)} \cdot \pref(\ball(i))
  \]
  And $\pi_i(z) = 0$ if $\time_i(z) = \infty$.
% Comments on the payoff function ?
\end{itemize}

\subsection{Properties of LCL Games}

\begin{fact}
  LCL games are well-rounded.
\end{fact}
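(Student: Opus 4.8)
The goal is to verify that the round function $r$ is non-decreasing with respect to the prefix relation. The plan is to identify $r$ with the synchronous-round structure that is built into the definition of an LCL game. Concretely, for a non-terminal history $x$ whose mover $P(x)=i$ is an actual player, let $\kappa(x)$ denote the index of the round during which $i$ plays at $x$. I claim that $r(x)=|Rec(x)|=\kappa(x)$. This rests on two features of the game. First, a player is removed from the active set only at the end of a round in which its ball becomes good, and the definition provides no mechanism to re-activate a player; hence being active in round $\kappa(x)$ forces having been active in each earlier round $0,1,\dots,\kappa(x)-1$. Second, each active player plays exactly once per round. Therefore, before its move at $x$, player $i$ has played exactly once in each of the rounds $0,\dots,\kappa(x)-1$, so that $|Rec(x)|=\kappa(x)$. (The initial Chance move does not contribute to $Rec(x)$, since $P(\varnothing)=c\neq i$.)

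Given this identification, I would then observe that $\kappa$ is non-decreasing along any prefix chain: the histories of the game are generated round by round (all moves of round~$0$, taken by the active players in increasing order of index, then all moves of round~$1$, and so on), so $y\preceq x$ implies $\kappa(y)\le\kappa(x)$. Combining the two facts, for any non-terminal histories $y\preceq x$ with $P(y)=j$ and $P(x)=i$ (possibly $i\neq j$), we obtain $r(y)=\kappa(y)\le\kappa(x)=r(x)$, which is exactly the well-roundedness condition. Note that the subcase $i=j$ also follows directly and more simply, since then every strict prefix of $y$ played by $i$ is a strict prefix of $x$ played by $i$, so $Rec(y)\subseteq Rec(x)$.

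It remains to treat terminal histories, which are handled by the conventions of the round-function definition. If $y\preceq x$ with $y$ terminal, then $y$ is a prefix of no other history, so $y=x$ and $r(y)=r(x)$. If $x$ is a finite terminal history, then $r(x)$ equals the round of its predecessor $x^-$, which is non-terminal; since $y\preceq x$ and $y\neq x$ give $y\preceq x^-$, the non-terminal argument yields $r(y)\le r(x^-)=r(x)$. If $x$ is infinite, then $r(x)=\infty\ge r(y)$ trivially.

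The main obstacle is establishing the correspondence $r(x)=\kappa(x)$ rigorously, and in particular the monotonicity of activity: one must argue that a player active in round $\kappa(x)$ was active, and therefore played, in every earlier round. This is precisely where the design of the LCL game is used, namely the irrevocability of decided values together with the action-restriction rule forbidding an active player from choosing a value incompatible with the already-good balls of the inactive players. These guarantee, via greedy constructibility, that a ball which has become good stays good, so that no player is ever re-activated; everything else in the argument is bookkeeping on the prefix order.
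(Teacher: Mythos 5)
Your proof is correct and follows essentially the same route as the paper's (much terser) argument, which rests on exactly the two facts you isolate: every active player plays once per round until it becomes inactive, and an inactive player is never re-activated. Your identification $r(x)=\kappa(x)$ and the handling of terminal histories are just careful bookkeeping on top of that same idea.
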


\begin{proof}
  This follows directly from the fact that, in a LCL game, every active player plays at
  every round until it becomes inactive, and, once inactive, a player cannot become active
  again. 
\end{proof}

\begin{fact}
  LCL games have perfect recall.
\end{fact}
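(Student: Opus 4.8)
The plan is to verify directly the definition of perfect recall recalled in Section~4.1. So I fix a player $i$, two information sets $u, u' \in U_i$, an action $a \in A(u')$, and histories $x \in u$, $x' \in u'$ with $(x',a) \preceq x$; then, given an arbitrary $y \in u$, I must exhibit a history $y' \in u'$ with $(y',a) \preceq y$. The guiding intuition is that, in an LCL game, the information set of player~$i$ at a history~$z$ is entirely encoded by the tuple $(\actions_j(z))_{j \in \ball(i)}$, which in particular contains player~$i$'s own record $\actions_i(z)$; thus a player never forgets her own past actions, and this is precisely what perfect recall demands.

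First I would pin down the round structure. Since $P(x') = P(u') = i = P(x)$ and $x' \prec x$ (the prefix is proper because $x' \prec (x',a) \preceq x$), the history $x'$ lies in $Rec(x)$ while $x' \notin Rec(x')$, so $Rec(x') \subsetneq Rec(x)$ and hence $r(x') < r(x)$; write $r' = r(x')$. The action $a$ taken by~$i$ right after $x'$ is therefore played at round $r' < r(x)$, and so it is recorded as player~$i$'s action of round~$r'$ in the sequence $\actions_i(x)$. Because $x$ and $y$ lie in the same information set $u$, we have $\actions_j(x) = \actions_j(y)$ for every $j \in \ball(i)$, and in particular for $j = i$; thus $\actions_i(x) = \actions_i(y)$, which yields both $r(x) = r(y)$ (consistently with Lemma~\ref{round_coherence}) and the fact that player~$i$ also plays $a$ at round~$r'$ in~$y$. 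Since $r' < r(x) = r(y)$ and activity is monotone (an inactive player never reactivates), player~$i$ is still active at round~$r'$ along~$y$, so the prefix $y'$ of~$y$ located at player~$i$'s turn in round~$r'$ is well defined, and by construction $(y',a) \preceq y$.

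It then remains to check that $y' \in u'$, which is the delicate step. Clearly $P(y') = i$, so I only need $\actions_j(y') = \actions_j(x')$ for every $j \in \ball(i)$. Here I would use that $\actions_j(x')$ (resp.\ $\actions_j(y')$) is exactly the truncation of $\actions_j(x)$ (resp.\ $\actions_j(y)$) to the actions played in rounds strictly before~$r'$; since the full records agree, $\actions_j(x) = \actions_j(y)$, their truncations to rounds before~$r'$ agree as well, giving $\actions_j(x') = \actions_j(y')$ for each $j \in \ball(i)$. This yields $y' \in u'$ and completes the verification.

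The main obstacle I anticipate is precisely this last step: making rigorous that truncating the action records to rounds before~$r'$ commutes with the equality $\actions_j(x) = \actions_j(y)$ for every member~$j$ of the ball, even though distinct players may become inactive at different rounds and hence carry records of different lengths. This requires observing that the fixed intra-round play order (active players in increasing order, after Chance's initial move, which fixes the graph and thus $\ball(i)$ common to the whole information set~$u$) makes the round index of each recorded action unambiguous, so that ``restricting to rounds before~$r'$'' is the same well-defined operation on~$x$ and~$y$ and depends only on the sequences $\actions_j$, which coincide.
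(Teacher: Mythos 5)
Your proposal is correct and follows essentially the same route as the paper's proof: identify $r(x') < r(x) = r(y)$, take $y'$ to be the unique prefix of $y$ at player $i$'s turn in round $r(x')$, use the fixed intra-round play order to get $\actions_j(x') = \actions_j(y')$ for all $j \in \ball(i)$ (hence $y' \in u'$), and read off the action $a$ from $\actions_i(x) = \actions_i(y)$ to conclude $(y',a) \preceq y$. The only difference is cosmetic — you establish $(y',a) \preceq y$ before membership in $u'$, while the paper does the reverse — and your explicit discussion of why truncation to rounds before $r(x')$ is well defined just makes precise the paper's one-line appeal to the common play order.
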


\begin{proof}
  Let $\Gamma(\mathcal{L},\mathbf{D},\pref,\delta) = (N, A, X, P, U, p, \pi)$ be
  an LCL game. Let $u$ and $u'$ be two information sets of the same player $i$,
  for which there exists $x \in u$, $x' \in u'$ and $a \in A(u')$, such that
  $(x',a) \preceq x$. Let $y$ be a history in $u$.
  Since $x$ and $y$ are in the same information set $u$, it follows that, for
  every player $j\in \ball(i)$, we have $\actions_j(x) = \actions_j(y)$. In particular, this
  implies that $x$ and $y$ are in the same round. Let $y'$ be the only history
  which is a prefix of $y$ with $P(y') = i$ and $r(y') = r(x')$. (Such a
  history exists because $r(x') < r(x)$ and $r(y) = r(x)$).
  Since the players play in the same order at every round, we get that, 
  for every player $j\in \ball(i)$, $\actions_j(x') = \actions_j(y')$. 
  Therefore $y' \in u'$. Furthermore, since $\actions_i(x) = \actions_i(y)$, the
  action played by $i$ after $y'$ must be $a$, which implies $(y',a) \preceq y$,
  and concludes the proof.
\end{proof}

\begin{fact}
  The payoff function $\pi$ of a LCL game is measurable on the $\sigma$-algebra
  $\Sigma$ corresponding to the game. 
\end{fact}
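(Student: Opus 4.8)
The plan is to reduce measurability of the vector-valued map $\pi : Z \to \mathbb{R}^n$ to that of its individual components, and then to exploit the fact that the payoff of each player is \emph{frozen} as soon as its whole ball has stopped playing. Since the Borel $\sigma$-algebra on $\mathbb{R}^n$ is generated by the coordinate projections, $\pi$ is $\Sigma$-measurable as soon as each $\pi_i : Z \to \mathbb{R}$ is. So I would fix a player $i$ and work with $\pi_i$ alone.

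The first observation is that $\pi_i$ has a countable range. By definition $\pi_i(z) \in \{0\} \cup \{\delta^{k}\,\pref(B) : k \in \mathbb{N},\ B \in \good(\cL)\}$. Because the alphabet $A$ is finite, the degree is bounded by $\Delta$, and the radius $t$ is fixed, there are only finitely many good balls $B$, so $\pref$ takes finitely many values; together with the countably many exponents $k$, the range of $\pi_i$ is countable. For a function with countable range, measurability is equivalent to having every level set $\pi_i^{-1}(c)$ in $\Sigma$, so it suffices to exhibit each such level set as a countable Boolean combination of the generators $Z_x$.

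The heart of the argument is the freezing property. Call a finite history $x$ \emph{$i$-settled} if every player $j \in \ball(i)$ is already inactive in $x$. Since a player becomes inactive exactly when its ball is good and, as used in the proof that LCL games are well-rounded, an inactive player never plays again, for an $i$-settled $x$ both the labelled ball $\ball(i)$ and each sequence $\actions_j$, $j\in\ball(i)$, are fixed once and for all. Hence $\time_i(z)=\max_{j\in\ball(i)}|\actions_j(z)|-1$ and $\pref(\ball(i))$ are the same for every terminal history $z \succeq x$, so $\pi_i$ is constant, say equal to $c(x)$, on the whole basic set $Z_x$. Conversely, if $\pi_i(z)\neq 0$ then $\time_i(z)$ is finite, which means that every player in $\ball(i)$ eventually becomes inactive along $z$; the finite prefix $x$ of $z$ reached right after the last of them settles is then $i$-settled and satisfies $z\in Z_x$.

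Putting these together, for every nonzero value $c$ I would write $\pi_i^{-1}(c)=\bigcup\{Z_x : x \text{ is } i\text{-settled and } c(x)=c\}$. There are only countably many finite histories (finite action set, countably many rounds), so this is a countable union of generators and therefore lies in $\Sigma$. Finally $\pi_i^{-1}(0)$ is the complement in $Z$ of the union of $\pi_i^{-1}(c)$ over the countably many nonzero values $c$, hence also in $\Sigma$. This shows that $\pi_i$, and thus $\pi$, is measurable. I expect the only delicate point to be the bookkeeping for the \emph{whole} ball $\ball(i)$ rather than $i$ alone: the terminating time looks at the slowest neighbour within distance $t$, so the settling prefix must witness the inactivity of all of $\ball(i)$, and one must make sure these far neighbours cannot reactivate and alter the payoff after $x$.
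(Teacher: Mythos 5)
Your proposal is correct and follows essentially the same route as the paper: both proofs hinge on the freezing property --- once every player of $\ball(i)$ is inactive at some finite prefix $x'$, the payoff $\pi_i$ is constant on the cylinder $Z_{x'}$ --- and use it to write payoff preimages as unions of the generating sets $Z_{x'}$. The only difference is bookkeeping: the paper takes preimages of rays $]a,+\infty[$ (using the discount factor to bound the settling round, which yields \emph{finite} unions, plus a countable union for $a=0$), whereas you take level sets of the countable-range function $\pi_i$ and use countable unions throughout; both are valid.
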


\begin{proof}
    We prove that, for every player $i$ and every $a \in \mathbb{R}$,
    $\pi_i^{-1}(]a,+\infty[) \in \Sigma$, which implies that $\pi$ is measurable
    on $\Sigma$.
    Without loss of generality, let us assume that $M=1$, that is, $\pi_i: Z \mapsto [0,1]$.  For every
    $a<0$, we have $\pi_i^{-1}(]a,+\infty[) = Z$, and thus is in $\Sigma$. Similarly,  for every $a > M$,
    we have $\pi_i^{-1}(]a,+\infty[) = \varnothing \in \Sigma$.
    So, let us assume that $a \in ]0,1]$ and let $z$ be a terminal history such that
    $\pi_i(z) > a$, which implies that $\time_i(z) < \frac{\ln a}{\ln \delta}$,
    i.e every player in $\ball(i)$ has played only a finite number of times in
    the history $z$. Let $x$ be the longest history such that $x \preceq z$ and
    $r(x) =\time_i(z)$. Then the history $x'$ that comes just after $x$ in $z$
    is the shortest prefix of $z$ such that every player in $\ball(i)$ is now
    inactive.
    Let $z'$ be a terminal history such that $x' \preceq z'$. Since every player
    $j \in \ball(i)$ is inactive after $x'$, it follows that the state of any
    such player in $z'$ is the same as its state in $z$. Therefore $\pi_i(z') =
    \pi_i(z)$.
    It follows from the above that, for any terminal history $z$ such that $\pi_i(z) > a$,
    there exists a finite history $x'$ in round $\time_i(z) + 1$ such that
    $z \in Z_{x'} \subseteq \pi_i^{-1}(]a,+\infty[)$. Since there can only be a finite
        number of histories in round $\time_i(z)$, we get  that
        $\pi_i^{-1}(]a,+\infty[)$ is the union of a finite number of sets of the
        form $Z_{x'}$. As a consequence, it is measurable in $\Sigma$.
        It remains to prove that $\pi_i^{-1}(]0,+\infty[) \in \Sigma$. actually, this 
        simply follows from the fact that $\pi_i^{-1}(]0,+\infty[) = \underset{k
          \geq 1}{\cup} \pi_i^{-1}(]\frac{1}{k},+\infty[)$, and the fact that that
        $\Sigma$ is stable by countable unions.
\end{proof}

\begin{fact}
  LCL games are continuous.
\end{fact}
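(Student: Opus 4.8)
The plan is to verify the definition of continuity directly: I must show that $d(b^k,b)\to 0$ forces $\sup_{i\in N}|\Pi_i(b^k)-\Pi_i(b)|\to 0$. Since $\Pi_i(b)=\int_\Sigma \pi_i\,d\mu_b$ depends on $b$ only through $\mu_b$, and $\mu_b$ is determined by the outcome via $\mu_b(Z_x)=\rho_b(x)$, I only need the outcome component of the metric: because $d(b^k,b)\ge d(\rho_{b^k},\rho_b)$, convergence $d(b^k,b)\to 0$ yields $|\rho_{b^k}(x)-\rho_b(x)|\le 2^{r(x)}\,d(b^k,b)\to 0$ for every fixed finite history $x$. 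The central difficulty is that an infinite-horizon game has an uncountable set of terminal histories, so the integral cannot be reduced to a sum over all of $Z$; the remedy is the exponential discounting built into $\pi_i$, which lets me truncate at a finite round while controlling the tail error \emph{uniformly} in~$i$.

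First I would fix $\epsilon>0$ and use $\pi_i(z)\le \delta^{\time_i(z)}M$ with $0<\delta<1$ to pick an integer round $t$ with $\delta^{t+1}M<\epsilon/3$. Splitting $Z$ into $\{\time_i\le t\}$ and $\{\time_i> t\}$, the tail contributes at most $\delta^{t+1}M\,\mu_b(\{\time_i> t\})\le \delta^{t+1}M<\epsilon/3$ (infinite histories, for which $\time_i=\infty$, contribute $0$). This bound holds simultaneously for $b$ and for every $b^k$, and crucially does not depend on the player~$i$; it discards the hard, uncountable part of the integral at once.

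Next I would reuse the structure already established in the measurability proof: the set $\{\time_i\le t\}$ is a finite disjoint union of cylinders $Z_{x'}$, where each $x'$ is the shortest prefix after which every player of $\ball(i)$ is inactive, so $r(x')\le t+1$, and $\pi_i$ is constant on each such $Z_{x'}$. Hence
\[
\int_{\{\time_i\le t\}}\pi_i\,d\mu_b=\sum_{x'}\ \pi_i\big|_{Z_{x'}}\ \rho_b(x'),
\]
a finite sum indexed by at most $N_t$ histories, where $N_t$ is the total number of histories of round at most $t+1$ (finite since $A$ and $N$ are finite, and independent of~$i$). Comparing this expression for $b^k$ and $b$, the difference is at most $N_t\cdot M\cdot \max_{x'}|\rho_{b^k}(x')-\rho_b(x')|\le N_t\,M\,2^{t+1}\,d(b^k,b)$, which is below $\epsilon/3$ for all $k$ large enough, uniformly over~$i$.

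Combining the tail bounds for $b$ and $b^k$ with the finite-sum estimate gives $|\Pi_i(b^k)-\Pi_i(b)|<\epsilon$ for all $i$ and all large~$k$, which is exactly continuity. The one point requiring genuine care is the uniformity over players: it holds because the truncation round $t$ is chosen from $\delta$ and $M$ only, and the count $N_t$ of relevant prefixes is common to all players, so the per-player data (the constant values $\pi_i|_{Z_{x'}}$ and which cylinders occur) affect only bounded factors, never the rate of convergence.
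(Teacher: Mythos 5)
Your proof is correct, and it takes a genuinely different---and in fact more careful---route than the paper's. The paper's own proof stops at cylinder-wise convergence: from $d(b^k,b)\to 0$ it deduces $|\rho_{b^k}(x)-\rho_b(x)|\to 0$ for every finite history $x$, hence $\mu_{b^k}(Z_x)\to\mu_b(Z_x)$, calls this ``strong convergence,'' and then asserts that measurability and boundedness of $\pi_i$ suffice to pass to the limit under the integral. As a general measure-theoretic principle that last step is not valid: convergence of probability measures on the generating cylinder sets does not imply convergence of integrals of arbitrary bounded measurable functions (e.g., on $\{0,1\}^{\omega}$ the point masses at $0^k 1 0^{\omega}$ converge cylinder-wise to the point mass at $0^{\omega}$, yet integrating the indicator of ``a $1$ occurs somewhere'' gives $1\not\to 0$). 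What makes the conclusion true for LCL games is exactly the structure of $\pi_i$ that your argument exploits: the discount factor yields the uniform tail bound $\delta^{t+1}M$ on $\{\time_i>t\}$, and on $\{\time_i\le t\}$ the payoff is---as the measurability fact already established---constant on finitely many cylinders of round at most $t+1$, where cylinder-wise convergence applies with the explicit rate $2^{t+1}d(b^k,b)$. So your $\epsilon/3$ decomposition supplies precisely the justification the paper's soft argument leaves implicit, at the cost of being longer; the paper's version buys brevity but is complete only once one observes that $\pi_i$ is, up to a uniformly small tail, a finite linear combination of cylinder indicators. Your extra care about uniformity over players is not strictly necessary (the player set $N$ is finite, so the supremum of finitely many vanishing sequences vanishes), but it is harmless and gives a bound independent of the number of players.
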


\begin{proof}
  Let $b$ be a strategy profile, and let $(b^k)_{k \geq 0}$ be a sequence of strategy
  profiles such that
  $d(b^k,b)\rightarrow~0$ when $k \rightarrow \infty$. By
  definition of the metric $d$ on $B$ (cf. subsection \ref{subsection:metrics}),
 we have that
  $d(\rho_{b^k},\rho_b)\rightarrow
  0$ when  $k \rightarrow \infty$. By definition of the metric on $O$, we have that, for any finite
  history $x$,
  \[
  |\rho_{b^k}(x) - \rho_b(x)| \underset{k \shortrightarrow
    \infty}{\longrightarrow} 0.
    \]
     It follows that for any set of the form $Z_x$
  as defined in subsection \ref{subsection:outcomes},
  \[
  |\mu_{b^k}(Z_x) - \mu_b(Z_x)| \underset{k \shortrightarrow
    \infty}{\longrightarrow} 0.
    \]
    In other words the sequence of measures
  $\mu_{b^k}$ \emph{strongly converges} to $\mu_b$. Since, for every player $i$,
  the function $\pi_i$ is measurable and bounded, it follows  
  that
  \[
  \int_{\Sigma} \pi_i\ d\mu_{b^k} \underset{k \shortrightarrow
    \infty}{\longrightarrow} \int_{\Sigma} \pi_i\ d\mu_b.
    \]
    Therefore,   $\Pi_i(b^k) \underset{k \shortrightarrow \infty}{\longrightarrow}
  \Pi_i(b)$, and thus the expected payoff function $\Pi$ is continuous.
\end{proof}

\subsection{Proof of Theorem~\ref{cor:mainMIS}}

%\begin{theorem}
%  Every LCL game has a trembling-hand perfect equilibrium.
%\end{theorem}
%
%\begin{proof}
  The above claims show that every LCL game satisfies the requirements of
  Lemma~\ref{theo:maingeneral}, and therefore has a symmetric trembling-hand perfect
  equilibrium. \qed

%%%%%%%%%%%%%%%%%%%%%%%%%%%%%%%%%%%%%%%%%%%%%
\section{Conclusion}
%%%%%%%%%%%%%%%%%%%%%%%%%%%%%%%%%%%%%%%%%%%%%

The objective of this paper is to address the issue of selfish behaviors in the context of distributed network computing.  Theorem~\ref{cor:mainMIS} establishes that distributed algorithms coping with selfishness do exist for a large fraction of locally checkable labeling tasks. This result is the necessary first step toward designing distributed algorithms in which the players have no incentive to deviate from their given instructions. One direction for further research consists in looking for efficient (centralized)  algorithms for computing the equilibria, as  as well as (distributed) protocols leading the nodes to automatically adopt the desired behavior, i.e., the one with highest social benefit, either in term of converging time or quality of the solution, potentially using incentives.  Another question for further studies is to measure the efficiency of  trembling-hand Nash equilibria for LCL games in term of round-complexity. Apart from a few special cases, like, e.g., complete networks, these issues are challenging problems, but quite rewarding both theoretically and practically. 
Finally, from a broader perspective, it would be desirable to characterize efficient distributed algorithms in terms of a suitable notion of equilibrium in (extensive form) games.

%%%%%%%%%%%%%%%%%%%%%%%%%%%%%%%%%%%%%%%%%%%%%
\newpage
\bibliographystyle{plain}
\bibliography{biblio}
%%%%%%%%%%%%%%%%%%%%%%%%%%%%%%%%%%%%%%%%%%%%%

\end{document}